\newtheorem{theorem}{Theorem}
\newtheorem{definition}{Definition}
\newtheorem{proposition}{Proposition}
\newtheorem{lemma}{Lemma}
\newtheorem{corollary}{Corollary}
\newtheorem{remark}{Remark}
\newcommand{\mc}{\mathcal}
\newcommand{\Vcal}{\mc{V}}
\newcommand{\Gcal}{\mc{G}}
\newcommand{\Ecal}{\mc{E}}
\newcommand{\discretset}[2]{ \left\{#1, \dots, #2 \right\}}
\newcommand{\Vset}{\discretset{1}{N}}
\newcommand{\sign}[1]{\text{sgn}\left(#1\right)}
\newcommand{\R}{\mathbb{R}}
\newcommand{\one}{\mathbf{1}}
\def\BibTeX{{\rm B\kern-.05em{\sc i\kern-.025em b}\kern-.08em
    T\kern-.1667em\lower.7ex\hbox{E}\kern-.125emX}}
\begin{document}

\title{Global synchronization of multi-agent systems with nonlinear interactions}

\author{Anthony Couthures$^1$
, Vineeth S. Varma$^{1,2}$
, Samson Lasaulce$^{1,3}$
, Irinel-Constantin Mor\u{a}rescu$^{1,2}$
\thanks{Manuscript submitted March 11$^{th}$ 2025.  (\emph{Corresponding author Anthony Couthures})}
\thanks{*This work has been funded by the CNRS MITI project BLESS and by the project DECIDE funded under the PNRR I8 scheme by the Romanian Ministry of Research.}
\thanks{$^1$associated with Universit\'e de Lorraine, CNRS, CRAN, F-54000 Nancy, France. {\tt\small anthony.couthures@univ-lorraine.fr}}%
\thanks{$^2$associated with Automation Department, Technical University of Cluj-Napoca, Memorandumului 28, 400114 Cluj-Napoca, Romania.}%
\thanks{$^3$associated with Khalifa University, Abu Dhabi, UAE.}
}

\maketitle

\begin{abstract}
The paper addresses the synchronization of multi-agent systems with continuous-time dynamics interacting through a very general class of monotonic continuous signal functions that covers estimation biases, approximation of discrete quantization, or state-dependent estimation. Our analysis reveals that, in the setup under consideration, synchronization equilibria are exactly the fixed points of the signal function. We also derive intuitive stability conditions based on whether the signal underestimates or overestimates the state of the agents around these fixed points. Moreover, we show that network topology plays a crucial role in asymptotic synchronization. These results provide interesting insights into the interplay between communication nonlinearity and network connectivity, paving the way for advanced coordination strategies in complex systems.
\end{abstract}

\begin{IEEEkeywords}
	Consensus dynamics, Multi-agent systems, Nonlinear interactions, Synchronization
\end{IEEEkeywords}

\section{INTRODUCTION}

Information exchange is a fundamental aspect of both multi-agent systems (MAS) and networked control systems (NCS), as the agents' dynamics are shaped by the information received from their neighbors. In many existing works on MAS, the interaction is considered to be perfect, i.e., the agents have access to the exact state of their neighbors  \cite{degrootReachingConsensus1974}. This assumption simplifies the analysis but does not hold in real-world scenarios where communication is constrained due to bandwidth limitations, quantization effects, or noise. To address this limitation, several works \cite{Martins2008,frasca2012continuous,chowdhuryContinuousOpinionsDiscrete2016,ceragioliConsensusDisagreementRole2018b,prisantDisagreementLimitCycles2024,couthuresAnalysisOpinionDynamics2024} have investigated quantized communication models in which agents exchange only discrete or coarse information about their states. 

On the other hand, nonlinear consensus dynamics has also gained significant attention, particularly in models where agents interact through smooth and odd signals \cite{grayMultiagentDecisionMakingDynamics2018,fontanRoleFrustrationCollective2022,bizyaevaNonlinearOpinionDynamics2023}. These approaches aim to capture more realistic scenarios where interactions are governed by nonlinear feedback mechanisms, often inspired by biological \cite{grayMultiagentDecisionMakingDynamics2018} or social \cite{bizyaevaNonlinearOpinionDynamics2023} systems. However, these works primarily analyze behaviors around the neutral synchronization equilibrium using bifurcation theory. 
While this approach provides valuable theoretical insights, it does not fully capture the global behavior of the system, particularly in scenarios where multiple stable synchronization equilibria or complex nonlinear interactions arise. This limitation highlights the need for an analysis that goes beyond local bifurcations around the origin.

In this paper, we present a framework for nonlinear consensus dynamics in which the interaction between agents is modeled by a general class of signals. By considering arbitrary monotonic, potentially non-smooth, Lipschitz-continuous signal functions, our analysis extends existing studies while preserving analytical tractability. This general setting encompasses classical linear consensus as well as continuous and piecewise approximations of discontinuous quantization, providing new insights on how communication nonlinearity influences the stability and convergence properties of the synchronization manifold in multi-agent systems.

The key contributions of this paper are as follows: 
\begin{enumerate}
\item We analyze a nonlinear consensus model in which agent interactions are described by monotonic Lipschitz-continuous signal functions. This formulation not only includes the traditional linear consensus model as a particular case but also accommodates a wide range of nonlinear continuous functions, as well as piecewise approximations of discontinuous quantization functions. 
\item We characterize the equilibria of the system and show that synchronization equilibria correspond to the fixed points of the signal function. For that, we introduce intuitive notions of over- and under- estimation functions to analyze whether the synchronization occurs. Based on these concepts, we establish conditions for the stability of synchronized equilibria. We also highlight how the shape of the signal function influences the convergence behavior. Moreover, we provide conditions under which solutions converge to a locally stable synchronization equilibrium and analyze the corresponding basins of attraction.
\item We show that the network topology impacts synchronization by demonstrating that agents with similar neighborhood structures asymptotically synchronize independently of the signal function.
\end{enumerate}

The rest of the paper is organized as follows. In Section \ref{sec:problem_formulation}, we formally define the problem and introduce the proposed nonlinear consensus model. Section \ref{sec:analysis} provides a detailed equilibrium and stability analysis. In subsection \ref{subsec:invariance_of_the_state_space} and \ref{subsec:invariance_of_the_synchronization_manifold}, we analyze the invariance of the state space and the synchronization manifold, respectively. In subsection \ref{subsec:synchronization_induced_by_the_communication_signal}, we analyze the synchronization equilibria of the dynamics induced by the communication signal, and in subsection \ref{subsec:synchronization_properties_induced_by_the_graph_structure}, we analyze the synchronization properties induced by the graph structure. We conclude the paper with some remarks and future work in Section \ref{sec:conclusions}.

\textbf{Notation} In the following, we will denote by $\R$ and $\R_{\geq 0}$ the set of real numbers and the set of non-negative real numbers, respectively. For a vector $\boldsymbol{x} \in \R^N$, we denote by $x_i$ the $i$-th component of $\boldsymbol{x}$. For a matrix $\boldsymbol{A} \in \R^{N\times N}$, we denote by $a_{ij}$ the element of $\boldsymbol{A}$ at the $i$-th row and $j$-th column. We denote by $\boldsymbol{e}_i$ the $i$-th vector of the canonical basis of $\R^N$. We denote by $\one$ the vector of $\R^N$ with all components equal to $1$. Moreover, we denote by $\mathrm{diag}(\boldsymbol{x}) \in \R^{N\times N}$ the diagonal matrix with diagonal elements given by the vector $\boldsymbol{x} \in \R^{N}$. For a function $f: \mathcal{X} \to \mathcal{X}$, we denote $\mathrm{Fix}(f) = \left\{ x \in \mathcal{X} \mid x =  f(x) \right\}$ the set of fixed points of $f$ in $\mathcal{X}$.

\section{PROBLEM FORMULATION} \label{sec:problem_formulation}
We consider the classical multi-agent framework in which $N$ individuals/agents belonging to the set $\Vcal = \Vset$, interact according to an \emph{undirected fixed graph} $\Gcal = \left(\Vcal, \Ecal\right)$. We denote by $\boldsymbol{A} \in \R^{N\times N}$ the symmetric \emph{adjacency matrix} associated with the graph, i.e., $a_{ij} = 1$ if $(i,j) \in \Ecal$ and $a_{ij} = 0$ otherwise. We denote, by $\boldsymbol{D} \in \R^{N\times N}$ the \emph{degree matrix}, i.e., $\boldsymbol{D} $ is the diagonal matrix with diagonal elements defined by $d_i = \sum_{j=1}^{N} a_{ij}$ for all $i\in \Vcal$. We also introduce $\boldsymbol{L} := \boldsymbol{D} - \boldsymbol{A}\in \R^{N\times N}$ the associated \emph{Laplacian matrix}. The \emph{neighborhood} of the agent $i$ is denoted by $\mathcal{N}_i$ and represents the set of agents that influence $i$ according to the graph $\Gcal$ (i.e., $j \in \mathcal{N}_i \Leftrightarrow (j, i) \in \Ecal$). By definition, the cardinality of $\mathcal{N}_i$ is $d_i$. Let us also recall that a \emph{path} in $\Gcal$ is a finite sequence of edges $(i_1,i_2),(i_2,i_3),\dots,(i_p,i_{p+1})$ such that $(i_k,i_{k+1})\in \Ecal$  for all $k\in \{1,\dots,p\}$. Two vertices $i$, $j\in \Vcal$ are \emph{connected} in $\Gcal$ if there exists a path in $\Gcal$ joining $i$ and $j$ (i.e., $i_1=i$ and $j_p=j$). The graph $\Gcal$ is \emph{connected} if any two vertices are connected.

The state of agent $i$ is denoted by $x_i \in \left[-1,1\right]$ and evolve according to the following dynamics
\begin{equation}\label{eq:dynamic_single_agent}
	\dot{x}_i = \frac{1}{d_i} \sum_{j = 1}^{N} a_{ij} s(x_j) - x_i,
\end{equation}
where $s : \left[-1,1\right] \to \left[-1,1\right]$ is a \textbf{common non-decreasing Lipschitz-continuous function}. 

The nonlinear dynamics \eqref{eq:dynamic_single_agent} captures the imperfect communication between agents via a broad class of signal functions. For example, common choices include the linear map $s(x)=x$ or the affine function $s(x)=ax+b$, which respectively model unbiased or biased estimations. One can also consider sigmoidal functions, such as $s(x) = \tanh(x)$, that are inspired by biological phenomena \cite{grayMultiagentDecisionMakingDynamics2018} or neuroscience \cite{bizyaevaNonlinearOpinionDynamics2023}.
Furthermore, \eqref{eq:dynamic_single_agent} can describe the behavior of quantized communication, where discontinuities are inherent. Specifically, discontinuous quantization schemes, as in \cite{Martins2008,ceragioliConsensusDisagreementRole2018b,chowdhuryContinuousOpinionsDiscrete2016}, can be approximated using a continuous piecewise affine signal function wherein different linear segments connect at discontinuities. The qualitative behavior of the agents' states remains unchanged since, as we will show later, synchronization occurs only at specific fixed points of the signal function. In practice, such a signal function $s(x)$ can be interpreted as the expectation of a discontinuous quantized signal, potentially perturbed by a small noise term of small amplitude. In any case, \eqref{eq:dynamic_single_agent} provides a robust and analytically tractable framework for analyzing a wide range of communication processes. This approach bridges the gap between idealized linear consensus and more realistic scenarios involving nonlinear or even discontinuous interactions.

In the following, we will consider the collective form of the dynamics \eqref{eq:dynamic_single_agent} of the agents given by 
\begin{equation}\label{eq:dynamic}
	\dot{\boldsymbol{x}} = \boldsymbol{D}^{-1} \! \boldsymbol{A} \boldsymbol{s}(\boldsymbol{x}) - \boldsymbol{x} := \boldsymbol{f}(\boldsymbol{x}),
\end{equation}
where the state is denoted by $\boldsymbol{x} = (x_1,\dots,x_N)^\top \in \left[-1,1\right]^N$ and $\boldsymbol{s}(\boldsymbol{x})$ the vector $(s(x_1),\dots,s(x_N))^\top$. 

\begin{remark}
	In the case where $s(x) = x$, the dynamics \eqref{eq:dynamic} becomes the classical normalized linear consensus dynamics. i.e., $\dot{\boldsymbol{x}} = \boldsymbol{D}^{-1} \! \boldsymbol{A} \boldsymbol{x} - \boldsymbol{x} = - \boldsymbol{D}^{-1} \boldsymbol{L} \boldsymbol{x}$.
\end{remark}

This choice of the normalized adjacency matrix $\boldsymbol{D}^{-1} \! \boldsymbol{A}$ rather than the standard adjacency matrix $\boldsymbol{A}$ ensures uniformity in the speed of the dynamics across the network. Furthermore, since $\boldsymbol{D}^{-1} \! \boldsymbol{A}$ is a stochastic matrix, the Perron-Frobenius theorem can be applied, leading to the following lemma:
\begin{lemma}[Perron-Frobenius]\label{lemma:perron_frobenius}
Let $\Gcal$ be a connected graph. Then, the normalized adjacency matrix $\boldsymbol{D}^{-1} \! \boldsymbol{A}$ has a simple eigenvalue $1$, and all other eigenvalues have modulus strictly less than $1$. Moreover, the vector $\one$ is the right eigenvector associated with eigenvalue $1$ of $\boldsymbol{D}^{-1} \! \boldsymbol{A}$.
\end{lemma}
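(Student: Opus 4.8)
The plan is to treat $\boldsymbol{D}^{-1}\boldsymbol{A}$ as a nonnegative, row-stochastic matrix whose sparsity pattern is that of the connected graph $\Gcal$, and to run the standard ``maximum-principle'' form of the Perron--Frobenius argument together with a symmetrization observation. First I would record the two immediate facts. Since $d_i=\sum_{j} a_{ij}$, every row of $\boldsymbol{D}^{-1}\boldsymbol{A}$ sums to $1$, so $\boldsymbol{D}^{-1}\boldsymbol{A}\one=\one$; this already gives the claimed right eigenvector, and since $\|\boldsymbol{D}^{-1}\boldsymbol{A}\|_\infty=1$ it also gives that every eigenvalue has modulus at most $1$. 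Moreover, $\boldsymbol{D}^{-1}\boldsymbol{A}=\boldsymbol{D}^{-1/2}(\boldsymbol{D}^{-1/2}\boldsymbol{A}\boldsymbol{D}^{-1/2})\boldsymbol{D}^{1/2}$ is similar to a real symmetric matrix, hence is diagonalizable with real spectrum contained in $[-1,1]$; in particular algebraic and geometric multiplicities coincide, so it suffices to reason about eigenvectors.

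The core step is a maximum-principle argument exploiting connectedness. Suppose $\boldsymbol{D}^{-1}\boldsymbol{A}\boldsymbol{v}=\boldsymbol{v}$ with $\boldsymbol{v}$ real (possible since the eigenvalue is real). Pick an index $i$ at which $|v_i|$ is maximal; writing $v_i=\frac{1}{d_i}\sum_{j\in\mathcal{N}_i}v_j$ as a uniform average and using $|v_j|\le|v_i|$ forces $v_j=v_i$ for every neighbour $j$ of $i$. Propagating this equality along paths — this is where connectedness of $\Gcal$ enters — shows $\boldsymbol{v}$ is a multiple of $\one$, so the eigenvalue $1$ is simple.

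The hard part is the strict inequality for the remaining eigenvalues. Because the spectrum is real and contained in $[-1,1]$, the only way strictness can fail is if $-1$ is an eigenvalue; repeating the maximum-principle argument for $\boldsymbol{D}^{-1}\boldsymbol{A}\boldsymbol{v}=-\boldsymbol{v}$ shows that such an eigenvector would have to flip sign consistently across every edge, i.e. it would force $\Gcal$ to be bipartite. Hence the outstanding point is to exclude this case — e.g. by invoking, or adding as a standing hypothesis, that $\Gcal$ is not bipartite, equivalently that $\boldsymbol{D}^{-1}\boldsymbol{A}$ is primitive — after which the Perron--Frobenius theorem for primitive matrices gives $|\lambda|<1$ for all $\lambda\neq 1$. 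I would also note that, even when $\Gcal$ is bipartite, the consequence actually used later still holds: the mode $\lambda=-1$ of $\boldsymbol{D}^{-1}\boldsymbol{A}$ corresponds to the eigenvalue $2$ of $\boldsymbol{D}^{-1}\boldsymbol{L}$, hence to a strictly contracting direction of $\dot{\boldsymbol{x}}=-\boldsymbol{D}^{-1}\boldsymbol{L}\boldsymbol{x}$.
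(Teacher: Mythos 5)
Your argument is essentially complete and, importantly, it is more careful than the paper itself: the paper states this lemma without proof, as a direct appeal to the Perron--Frobenius theorem, and your maximum-principle derivation exposes the fact that the statement as written is not quite correct. The classical theorem for an irreducible nonnegative matrix only guarantees that the peripheral eigenvalues are the $h$-th roots of unity times the spectral radius, where $h$ is the period; strict inequality $|\lambda|<1$ for $\lambda\neq 1$ requires primitivity, which for a connected undirected graph is exactly non-bipartiteness. Your counterexample class is real and not hypothetical within the paper's own scope: for any connected bipartite graph (already for $\boldsymbol{A}=\left(\begin{smallmatrix}0&1\\1&0\end{smallmatrix}\right)$, and in particular for the complete bipartite graphs invoked in Corollary~\ref{coro:all_to_all}), the $\pm1$ indicator of the bipartition is an eigenvector of $\boldsymbol{D}^{-1}\boldsymbol{A}$ with eigenvalue $-1$. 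So the lemma needs either an added non-bipartiteness hypothesis or a weakened conclusion (``modulus at most $1$''), and your closing observation is the right one: everywhere the paper actually uses the lemma (Lemma~\ref{lemma:forward_invariant}, Proposition~\ref{prop:synchronization_equilibria_are_fixed_points}), only row-stochasticity, i.e.\ $\boldsymbol{D}^{-1}\boldsymbol{A}\one=\one$ and spectral radius $\le 1$, is needed, so nothing downstream breaks.

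Two minor points to tighten. First, when you pick $i$ maximizing $|v_i|$ and average, you should say explicitly that equality in $|v_i|\le\frac{1}{d_i}\sum_{j\in\mathcal{N}_i}|v_j|\le|v_i|$ forces both $|v_j|=|v_i|$ and sign agreement $v_j=v_i$ for every neighbour $j$ before propagating along paths; as phrased, ``uniform average and $|v_j|\le|v_i|$ forces $v_j=v_i$'' skips the sign step, which is precisely the step that fails (by design) in the $\lambda=-1$ case. Second, the similarity $\boldsymbol{D}^{-1}\boldsymbol{A}=\boldsymbol{D}^{-1/2}\bigl(\boldsymbol{D}^{-1/2}\boldsymbol{A}\boldsymbol{D}^{-1/2}\bigr)\boldsymbol{D}^{1/2}$ requires all $d_i>0$, which holds for a connected graph with $N\ge 2$ but should be stated, since $\boldsymbol{D}^{-1}$ is otherwise undefined. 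With these caveats your proof is a clean, elementary, and self-contained replacement for the paper's bare citation, and it correctly locates the one genuine defect in the statement.
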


Consequently, as provided in Lemma~\ref{lemma:perron_frobenius}, normalization ensures that the spectral radius of $\boldsymbol{D}^{-1} \! \boldsymbol{A}$ is at most $1$ for connected graphs. This property allows the dynamics to be forward invariant for the synchronization manifold, as seen in Section \ref{subsec:invariance_of_the_synchronization_manifold}. Moreover, the qualitative behavior of the system remains consistent with the non-normalized case, as in \cite{fontanRoleFrustrationCollective2022}. Although normalization affects the timescales of the interactions by standardizing the influence of different nodes, it does not alter the fundamental structure of synchronization equilibria. 

\section{Analysis} \label{sec:analysis}

In this section, we analyze the proposed nonlinear consensus dynamics. We first establish invariance properties, proving that the state space and synchronization manifold remain forward invariant. Next, we characterize synchronization equilibria as the fixed points of the signal function and introduce underestimation and overestimation, which determine stability and convergence. We then derive stability conditions and identify attraction basins. Finally, we examine how network topology influences synchronization, showing that agents with symmetric neighborhoods, as well as those in all-to-all and bipartite graphs, asymptotically synchronize.

\subsection{Invariance of the state space} \label{subsec:invariance_of_the_state_space}
First, we will present the first invariant set, namely the set of admissible states for the dynamics to ensure that the state of the system remains in a bounded domain. 

\begin{proposition}
	Let $\Gcal$ be a connected graph. The set $\mathcal{X} = \left[-1,1\right]^N$ is forward invariant for the dynamics \eqref{eq:dynamic}. i.e., if $\boldsymbol{x}(0) \in \mathcal{X}$, then $\boldsymbol{x}(t) \in \mathcal{X}$ for all $t \geq 0$.
\end{proposition}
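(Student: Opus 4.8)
The plan is to show that the boundary of the hypercube $\mathcal{X} = [-1,1]^N$ is ``repelling'' in the sense that the vector field $\boldsymbol{f}$ points inward (or tangentially) there, which by Nagumo's theorem for closed convex sets yields forward invariance. Concretely, I would verify the sub-tangentiality condition face by face: the relevant faces of $\mathcal{X}$ are $\{\boldsymbol{x} : x_i = 1\}$ and $\{\boldsymbol{x} : x_i = -1\}$ for each $i$, with outward normals $\boldsymbol{e}_i$ and $-\boldsymbol{e}_i$ respectively. So it suffices to check that $f_i(\boldsymbol{x}) \leq 0$ whenever $x_i = 1$ (with the rest of the coordinates in $[-1,1]$), and $f_i(\boldsymbol{x}) \geq 0$ whenever $x_i = -1$.

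The computation is immediate from the structure of the dynamics. Suppose $\boldsymbol{x} \in \mathcal{X}$ with $x_i = 1$. Since $s$ maps $[-1,1]$ into $[-1,1]$, we have $s(x_j) \leq 1$ for every $j$, hence
\begin{equation*}
	f_i(\boldsymbol{x}) = \frac{1}{d_i}\sum_{j=1}^{N} a_{ij}\, s(x_j) - x_i \leq \frac{1}{d_i}\sum_{j=1}^{N} a_{ij} - 1 = \frac{d_i}{d_i} - 1 = 0,
\end{equation*}
where I used $\sum_{j} a_{ij} = d_i$. Symmetrically, if $x_i = -1$, then $s(x_j) \geq -1$ for all $j$ gives $f_i(\boldsymbol{x}) \geq -1 - (-1) = 0$. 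Thus on each boundary face the vector field does not push the trajectory out of $\mathcal{X}$.

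To conclude rigorously I would invoke the standard Nagumo invariance criterion: for a closed convex set and a locally Lipschitz vector field (here $\boldsymbol{f}$ is Lipschitz because $\boldsymbol{s}$ is Lipschitz and the rest is linear, so solutions exist and are unique), the set is forward invariant if and only if the field is subtangential at every boundary point, i.e., lies in the tangent cone. The inequalities above show exactly that the active outward-normal components are nonpositive at each boundary point, which is the tangent-cone condition. Alternatively, one can give a hands-on argument bypassing Nagumo: for small $\varepsilon > 0$ consider the slightly enlarged box $[-1-\varepsilon, 1+\varepsilon]^N$ and show $\frac{d}{dt}\max_i (x_i^2)$ is strictly negative near the boundary — but the face-wise normal argument is cleaner. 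I do not expect any real obstacle here; the only point requiring a word of care is noting that $\boldsymbol{f}$ is (globally) Lipschitz on $\mathcal{X}$ so that solutions are well-defined, and that connectedness of $\Gcal$ is only used implicitly to guarantee $d_i \geq 1$ for all $i$ (so that $\boldsymbol{D}^{-1}$ is well-defined).
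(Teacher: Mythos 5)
Your proof is correct and follows essentially the same approach as the paper's: verify the sub-tangentiality condition on each face of the hypercube using the row-stochasticity of $\boldsymbol{D}^{-1}\!\boldsymbol{A}$ and the fact that $s$ maps $[-1,1]$ into itself, then invoke Nagumo's theorem together with Lipschitz continuity of the vector field. Your remarks on why connectedness is needed (only to ensure $d_i \geq 1$) are a small but accurate clarification beyond what the paper states.
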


\begin{proof}
	Let $\boldsymbol{x} \in \mathcal{X}$, then for all $i \in \Vcal$, $-1 \leq x_i \leq 1$. Then, for all $i \in \Vcal$, $-1 \leq s(x_i) \leq 1$ by definition of $s$. Since $\boldsymbol{D}^{-1}\! \boldsymbol{A}$ is row stochastic, each row sums to 1. Then, for all $i \in \Vcal$, $-1 \leq\boldsymbol{e}_i^\top \boldsymbol{D}^{-1} \! \boldsymbol{A} \boldsymbol{s}(\boldsymbol{x}) \leq 1$. Since the vector field, \eqref{eq:dynamic} is Lipschitz-continuous, we can apply Nagumo's theorem \cite{blanchiniSetInvarianceControl1999} to ensure that the solution remains in $\mathcal{X}$. Then, an analysis at the set's boundary ensures that the solution remains inside of $\mathcal{X}$. Then, for any $i \in \Vcal$ such that $x_i = 1$, one has $\dot{x}_i = \boldsymbol{e}_i^\top \boldsymbol{D}^{-1} \! \boldsymbol{A} \boldsymbol{s}(\boldsymbol{x}) -1 \leq 0$. Similarly, for any $i \in \Vcal$ such that $x_i = -1$, one has $\dot{x}_i = \boldsymbol{e}_i^\top \boldsymbol{D}^{-1} \! \boldsymbol{A} \boldsymbol{s}(\boldsymbol{x}) + 1 \geq 0$. Thus, for all $i \in \Vcal$, $-1 \leq x_i(t) \leq 1$ for all $t \geq 0$.
\end{proof}

This proposition ensures that the dynamics \eqref{eq:dynamic} is well-defined on $\mathcal{X}$ and that the state of the system will always be in $\mathcal{X}$ justifying the choice of the set $\mathcal{X}$ as the state space of the system.

\subsection{Invariance of the synchronization manifold} \label{subsec:invariance_of_the_synchronization_manifold}

Let us now define the set of synchronization. This set is the set of states where all agents have the same state. 

\begin{definition}
	A state is a \emph{synchronization} if $x_i = x_j$ for all $i,j \in \Vcal$. The \emph{synchronization manifold} is defined as 
	\begin{equation*}
		\mathcal{S} = \left\{ \boldsymbol{x} \in \left[-1,1\right]^N \mid \forall i,j \in \Vcal, \, \, x_i = x_j \right\}.
	\end{equation*}

    For $c \in \left[-1,1\right]$, we denote by $\mathcal{S}_c = \left\{ \boldsymbol{x} \in \mathcal{S} \mid \boldsymbol{x} = c \one \right\}$ the synchronization at $c$ and for $M \subset \left[-1,1\right]$, by $\mathcal{S}_{M} = \left\{ \boldsymbol{x} \in \mathcal{S} \mid \exists p \in M, \, \boldsymbol{x} = p \one \right\}$ the synchronization manifold intersecting $M^N$.
\end{definition}

We will now establish the forward invariance of $\mathcal{S}$ for the dynamics \eqref{eq:dynamic}.
\begin{lemma}\label{lemma:forward_invariant}
	The synchronization manifold $\mathcal{S}$ is forward invariant for the dynamics \eqref{eq:dynamic}. i.e., if $\boldsymbol{x}(0) \in \mathcal{S}$, then $\boldsymbol{x}(t) \in \mathcal{S}$ for all $t \geq 0$.
\end{lemma}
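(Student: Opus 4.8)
The plan is to show that the synchronization manifold $\mathcal{S} = \{c\one : c \in [-1,1]\}$ is invariant by checking that, whenever $\boldsymbol{x} = c\one$ lies on $\mathcal{S}$, the vector field $\boldsymbol{f}(\boldsymbol{x})$ is again parallel to $\one$, so the flow cannot leave the line spanned by $\one$. Concretely, first I would take $\boldsymbol{x} = c\one$ with $c \in [-1,1]$ and compute $\boldsymbol{s}(\boldsymbol{x})$: since $s$ is applied componentwise and every component equals $c$, we get $\boldsymbol{s}(c\one) = s(c)\one$. Then $\boldsymbol{f}(c\one) = \boldsymbol{D}^{-1}\boldsymbol{A}\,(s(c)\one) - c\one = s(c)\,(\boldsymbol{D}^{-1}\boldsymbol{A}\one) - c\one$. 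By Lemma~\ref{lemma:perron_frobenius}, $\one$ is the right eigenvector of $\boldsymbol{D}^{-1}\boldsymbol{A}$ for eigenvalue $1$ (equivalently, row-stochasticity), so $\boldsymbol{D}^{-1}\boldsymbol{A}\one = \one$, giving $\boldsymbol{f}(c\one) = (s(c) - c)\one$. Thus the vector field restricted to $\mathcal{S}$ is tangent to $\mathcal{S}$.

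Next I would translate this tangency into forward invariance. One clean way: note that on $\mathcal{S}$ the dynamics reduces to the scalar ODE $\dot{c} = s(c) - c =: g(c)$, which is Lipschitz on $[-1,1]$ (difference of the Lipschitz map $s$ and the identity), hence has a unique solution $c(t)$ for any $c(0) \in [-1,1]$; moreover this scalar flow stays in $[-1,1]$ because $g(1) = s(1) - 1 \le 0$ and $g(-1) = s(-1) + 1 \ge 0$, exactly as in the boundary analysis of the previous proposition. Then $t \mapsto c(t)\one$ is a solution of \eqref{eq:dynamic} with initial condition $\boldsymbol{x}(0) = c(0)\one$, and by uniqueness of solutions to \eqref{eq:dynamic} (guaranteed by Lipschitz-continuity of $\boldsymbol{f}$ via Picard--Lindelöf) it is \emph{the} solution. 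Hence $\boldsymbol{x}(t) = c(t)\one \in \mathcal{S}$ for all $t \ge 0$. Alternatively one could invoke Nagumo's theorem again, using that $\mathcal{S}$ is closed and the tangency condition $\boldsymbol{f}(\boldsymbol{x}) \in T_{\boldsymbol{x}}\mathcal{S}$ holds at every point of $\mathcal{S}$, but the explicit reduction-plus-uniqueness argument is the most transparent.

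There is essentially no serious obstacle here: the only thing to be careful about is which property of $\boldsymbol{D}^{-1}\boldsymbol{A}$ is being used — namely $\boldsymbol{D}^{-1}\boldsymbol{A}\one = \one$, which is just row-stochasticity and does not even require connectedness — and making sure the uniqueness of solutions (or the Nagumo tangency hypothesis) is properly cited so that tangency really does upgrade to invariance. The mild subtlety that $s$ maps $[-1,1]$ into $[-1,1]$ (so $s(c)$ is defined and the reduced scalar flow stays in range) has already been handled in the invariance-of-state-space proposition and can simply be referenced.
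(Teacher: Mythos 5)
Your proposal is correct and rests on the same key computation as the paper's proof: $\boldsymbol{s}(c\one)=s(c)\one$ together with $\boldsymbol{D}^{-1}\boldsymbol{A}\one=\one$ gives $\boldsymbol{f}(c\one)=(s(c)-c)\one$, i.e.\ the vector field is tangent to $\mathcal{S}$. In fact your version is slightly more complete, since you explicitly upgrade tangency to invariance via the reduced scalar ODE and uniqueness of solutions, a step the paper leaves implicit after showing $\dot{x}_i-\dot{x}_j=0$ on $\mathcal{S}$.
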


\begin{proof}
	Let $\boldsymbol{x} \in \mathcal{S}$, i.e., $\boldsymbol{x} = c \one$ for some $c \in \left[-1,1\right]$. Then, for all $i,j \in \mathcal{V}$, one has that
	\begin{align*}
		\dot{x}_i - \dot{x}_j &= \left( \boldsymbol{e}_i - \boldsymbol{e}_j \right)^\top \! \left( \boldsymbol{D}^{-1}\!\boldsymbol{A}\boldsymbol{s}(\boldsymbol{x}) - \boldsymbol{x} \right)\\ 
		&= \left( \boldsymbol{e}_i - \boldsymbol{e}_j \right)^\top \! \left( \boldsymbol{D}^{-1}\!\boldsymbol{A}\boldsymbol{s}(c\one) - c\one \right)\\
		&= \left( s(c) - c \right) \left( \boldsymbol{e}_i - \boldsymbol{e}_j \right)^\top \! \one = 0.
	\end{align*}
	Since $\one$ is the eigenvector associated with eigenvalue $1$ of $\boldsymbol{D}^{-1}\boldsymbol{A}$, from Lemma~\ref{lemma:perron_frobenius}. Thus, the synchronization manifold is forward invariant.
\end{proof}

The previous lemma shows that once the agents are synchronized, they will remain synchronized for all times. This is a direct consequence of the fact that the weight matrix $\boldsymbol{D}^{-1} \boldsymbol{A}$ is a stochastic matrix. This property is not satisfied for the classical adjacency matrix $\boldsymbol{A}$ as it is not row stochastic in general.

\subsection{Synchronization induced by the communication signal} \label{subsec:synchronization_induced_by_the_communication_signal}

In this section, we analyze the synchronization equilibria of dynamics \eqref{eq:dynamic}. Whereas classical linear consensus dynamics admit any synchronization state as an equilibrium, we show that this property does not hold in our case. Instead, we prove that synchronization equilibria correspond precisely to the fixed points of the communication function $s$.
We will present stability results for the synchronization equilibria independently of the graph topology. 

\begin{proposition}\label{prop:synchronization_equilibria_are_fixed_points}
	Let $\Gcal$ be a connected graph. Then, the only synchronization equilibria of the dynamics \eqref{eq:dynamic} are the fixed points of $s$. i.e., $\mathcal{S}_{\mathrm{Fix}(s)}$ is the set of synchronization equilibria.
\end{proposition}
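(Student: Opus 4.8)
The plan is to characterize the equilibria of \eqref{eq:dynamic} that lie on the synchronization manifold $\mathcal{S}$ and show they are exactly $\mathcal{S}_{\mathrm{Fix}(s)}$. A point $\boldsymbol{x} = c\one$ is an equilibrium if and only if $\boldsymbol{f}(c\one) = \zero$, i.e. $\boldsymbol{D}^{-1}\!\boldsymbol{A}\,\boldsymbol{s}(c\one) - c\one = \zero$. Since $\boldsymbol{s}(c\one) = s(c)\one$ and $\one$ is the eigenvector of $\boldsymbol{D}^{-1}\!\boldsymbol{A}$ associated with eigenvalue $1$ (Lemma~\ref{lemma:perron_frobenius}), the left-hand side equals $s(c)\one - c\one = (s(c) - c)\one$. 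This vanishes if and only if $s(c) = c$, i.e. $c \in \mathrm{Fix}(s)$. Hence $c\one$ is an equilibrium precisely when $c$ is a fixed point of $s$, which gives the inclusion and reverse inclusion between the synchronization equilibria and $\mathcal{S}_{\mathrm{Fix}(s)}$.

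Concretely, I would carry this out in the following order. First, recall from Lemma~\ref{lemma:forward_invariant} that $\mathcal{S}$ is forward invariant, so that synchronization equilibria are well-defined and are exactly the zeros of $\boldsymbol{f}$ restricted to $\mathcal{S}$. Second, take an arbitrary $\boldsymbol{x} \in \mathcal{S}$, write $\boldsymbol{x} = c\one$ with $c \in [-1,1]$, and compute $\boldsymbol{f}(c\one) = \boldsymbol{D}^{-1}\!\boldsymbol{A}\,s(c)\one - c\one = (s(c)-c)\one$ using the eigenvector property. Third, observe that $(s(c)-c)\one = \zero$ is equivalent to $s(c) = c$. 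Fourth, conclude that the set of synchronization equilibria is $\{c\one : s(c) = c\} = \mathcal{S}_{\mathrm{Fix}(s)}$.

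There is essentially no serious obstacle here: the argument is a direct computation exploiting that $\one$ is a (right) eigenvector of the row-stochastic matrix $\boldsymbol{D}^{-1}\!\boldsymbol{A}$ with eigenvalue $1$, which is guaranteed by connectedness via Lemma~\ref{lemma:perron_frobenius}. The only point deserving care is to state clearly that connectedness is what makes $1$ a genuine eigenvalue with eigenvector $\one$ (so the proposition's hypothesis is actually used), and — if one wants to be thorough — to note that the claim concerns equilibria \emph{within} the synchronization manifold, not a claim that these are the only equilibria of \eqref{eq:dynamic} overall. No Lipschitz or monotonicity property of $s$ is needed for this particular statement; those will enter only in the subsequent stability analysis.
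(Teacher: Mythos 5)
Your proof is correct and follows essentially the same route as the paper: write the synchronized state as $c\one$, use the eigenvector property $\boldsymbol{D}^{-1}\!\boldsymbol{A}\,\one = \one$ from Lemma~\ref{lemma:perron_frobenius} to reduce $\boldsymbol{f}(c\one)$ to $(s(c)-c)\one$, and conclude $s(c)=c$. The only (minor) difference is that you make the reverse inclusion explicit, whereas the paper states only the forward direction and leaves the converse implicit in the same computation.
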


\begin{proof}
	Let $\boldsymbol{x}^* \in \mathcal{S}$ be a synchronization equilibrium. i.e., $\boldsymbol{x}^* = c \one$ for some $c \in \left[-1,1\right]$. Then, by Lemma~\ref{lemma:perron_frobenius}, one has
    \begin{equation*}
        \boldsymbol{x}^* = \boldsymbol{D}^{-1}\! \boldsymbol{A} \boldsymbol{s}(\boldsymbol{x}^*) = \boldsymbol{D}^{-1}\! \boldsymbol{A} \boldsymbol{s}(c \one) = s(c)\boldsymbol{D}^{-1}\! \boldsymbol{A}  \one = s(c) \one.
    \end{equation*}
    Since $\boldsymbol{x}^* = c \one$, one has $s(c) = c$. Thus, $c \in \mathrm{Fix}(s)$ and $\boldsymbol{x}^* \in \mathcal{S}_{\mathrm{Fix}(s)} $.
\end{proof}

The previous proposition establishes a complete characterization of the synchronization equilibria: they are in one-to-one correspondence with the fixed points of $s$. This is a fundamental difference from classical linear consensus dynamics, where any synchronization state is an equilibrium. Indeed, this property only holds in our case when $s(x) = x$, for which $\mathcal{S}_{\mathrm{Fix}(s)} = \mathcal{S}$. The synchronization equilibria form a strict subset of the synchronization manifold for any other communication function.

Before analyzing the stability of synchronization equilibria, let us introduce the notion of underestimation and overestimation for the communication function $s$.

\begin{definition}\label{def:underestimation}
    A signal function $s$ is said to be:
    \begin{itemize}
		\item an \emph{underestimation} at $x \in \left[-1,1\right]$ if $x(s(x) - x) \leq 0$. 
        \item an \emph{overestimation} at $x \in \left[-1,1\right]$ if $x(s(x) - x) \geq 0$.
        \item a \emph{perfect estimation} in interval $I$ if for all $x \in I$, $s(x) = x$. i.e., $s$ is an underestimation and overestimation in $I$.
        \item a \emph{consistent estimation} around $c \in \mathrm{Fix}(s)$ if there exists a neighborhood $I$ of $c$ where for all $x\in I$, $(x-c)(s(x) - x) \leq 0$.
    \end{itemize}
\end{definition}

In other words, a function is an underestimation (resp. overestimation) at a point if the signal it sends is closer (resp. further) to $0$ than the point itself. It is a local consistent estimation if it sends the signal closer to $c$ than the point itself in a neighborhood of $c$ (e.g., $s$ is a consistent estimation around $0$ if $s$ is an underestimation around $0$). All those properties can be extended to an interval $I$ if the property holds for all $x \in I$. When $I$ is the whole interval $[-1,1]$, we say that the property is global.

The following proposition analyzes the case where the function $s$ is an underestimation at all points in the interval $[-1,1]$, i.e., $s$ is globally underestimating.

\begin{proposition}\label{prop:stability_synchronization_for_underestimation}
	Let $\Gcal$ be a connected graph and $s$ be a globally underestimating signal function. Then, every solution of \eqref{eq:dynamic} starting in $\mathcal{X}$ approaches a synchronization equilibrium at a fixed point of $s$. i.e., $\lim_{t \to \infty} \boldsymbol{x}(t) = \boldsymbol{x}^* \in \mathcal{S}_{\mathrm{Fix}(s)}$ for all $\boldsymbol{x}(0) \in \mathcal{X}$.
\end{proposition}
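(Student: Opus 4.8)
The plan is to construct a Lyapunov-type function using the extreme coordinates of the state vector and show that the interval spanned by the agents' states shrinks until it collapses onto a fixed point of $s$. Define $M(t) = \max_{i} x_i(t)$ and $m(t) = \min_i x_i(t)$; both are well-defined and Lipschitz (hence absolutely continuous) along trajectories because $\boldsymbol{x}(\cdot)$ is $C^1$ and the max/min of finitely many $C^1$ functions is piecewise $C^1$. I would first argue, using the forward invariance of $\mathcal{X}$ (the first Proposition) and row-stochasticity of $\boldsymbol{D}^{-1}\boldsymbol{A}$, that $M$ is non-increasing and $m$ is non-decreasing: at a time where $x_i = M$, we have $\dot x_i = \boldsymbol{e}_i^\top \boldsymbol{D}^{-1}\boldsymbol{A}\boldsymbol{s}(\boldsymbol{x}) - M \le s(M) - M$, since $s$ is non-decreasing so $s(x_j) \le s(M)$ for all $j$ and the row sums are $1$; symmetrically $\dot x_i \ge s(m) - m$ when $x_i = m$. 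Using Danskin's theorem / the standard upper-Dini-derivative argument for maxima, $D^+ M(t) \le s(M(t)) - M(t)$ and $D^+(-m(t)) \le -(s(m(t)) - m(t)) = m(t) - s(m(t))$.

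Next I would exploit global underestimation, $x(s(x)-x)\le 0$. On $x \in [0,1]$ this gives $s(x) \le x$, so $D^+ M \le s(M) - M \le 0$ whenever $M \ge 0$; on $x \in [-1,0]$ it gives $s(x) \ge x$, so $D^+(-m) \le m - s(m) \le 0$ whenever $m \le 0$. Combined with the two monotonicity facts from the previous paragraph (which hold regardless of sign), this shows $M(t)$ is non-increasing and $m(t)$ is non-decreasing for all $t$, so both converge to limits $M_\infty \ge m_\infty$, and the compact "box" $[m(t),M(t)]^N$ is a nested family of forward-invariant sets. By LaSalle's invariance principle applied on the forward-invariant compact set $\{\boldsymbol{x} : m(0) \le x_i \le M(0)\}$, every trajectory converges to the largest invariant set contained in $\{D^+V = 0\}$ for the Lyapunov function $V(\boldsymbol{x}) = \max_i x_i - \min_i x_i$ (one must check $V$ is a valid, continuous, non-increasing Lyapunov function — it is, by the above).

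To finish I need to identify that largest invariant set with $\mathcal{S}_{\mathrm{Fix}(s)}$. On an invariant set where $V$ is constant, consider an agent $i$ repeatedly achieving the maximum $M_\infty$; invariance forces $\dot x_i = 0$ along the orbit at such instants, hence $\boldsymbol{e}_i^\top \boldsymbol{D}^{-1}\boldsymbol{A}\boldsymbol{s}(\boldsymbol{x}) = M_\infty$, which (again by the row-sum-$1$ and $s(x_j) \le s(M_\infty)$ argument) forces $s(x_j) = s(M_\infty)$ for every neighbour $j$ of $i$, and by connectedness, propagating along paths, $s(x_j) = s(M_\infty)$ for all $j$. A parallel argument at the minimizing agent gives $s(x_j) = s(m_\infty)$ for all $j$, so $s(M_\infty) = s(m_\infty) =: \sigma$, and then $M_\infty = \boldsymbol{e}_i^\top\boldsymbol{D}^{-1}\boldsymbol{A}\boldsymbol{s}(\boldsymbol{x}) = \sigma$, similarly $m_\infty = \sigma$; hence $M_\infty = m_\infty$, i.e. the limit set lies in $\mathcal{S}$, and by Proposition~\ref{prop:synchronization_equilibria_are_fixed_points} it lies in $\mathcal{S}_{\mathrm{Fix}(s)}$. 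Since $\mathcal{S}_{\mathrm{Fix}(s)}$ consists of isolated-or-not equilibria and each trajectory's $\omega$-limit set is connected and invariant and contained in $\mathcal{S}_{\mathrm{Fix}(s)}$, and since $V \to 0$ already pins down the limit to a single synchronized value (the limit set is connected inside $\mathcal{S}$, where it is parametrized by the scalar $c$, and $\dot x = 0$ there), the trajectory converges to a single point $\boldsymbol{x}^* = c\one$ with $c \in \mathrm{Fix}(s)$.

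The main obstacle I anticipate is the rigorous handling of the non-smooth Lyapunov function $V$ and of "agents repeatedly achieving the max" in the invariance argument: one must either invoke a nonsmooth LaSalle principle (Dini derivatives, as in Clarke or Bacciotti–Ceragioli) or, alternatively, avoid LaSalle entirely by a direct barrier/comparison argument showing $M(t) - m(t) \to 0$ and then that $M(t)$ itself converges to a point of $\mathrm{Fix}(s)$ via the scalar comparison ODE $\dot y = s(y) - y$ with $y \ge 0$ eventually (so $\dot y \le 0$ and $y$ decreases to a zero of $s(y)-y$, a fixed point of $s$); the symmetric comparison from below then squeezes $m(t)$ up to the same value. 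I would likely present the comparison-ODE version as the cleaner route, since it sidesteps nonsmooth invariance machinery, though care is still needed when $M_\infty$ or $m_\infty$ could be $0$ and the sign-split in the underestimation hypothesis must be applied on the correct side.
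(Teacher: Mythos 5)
Your max--min contraction strategy is a genuinely different route from the paper's, which instead uses the smooth weighted quadratic Lyapunov function $\boldsymbol{V}(\boldsymbol{x})=\tfrac{1}{2}\boldsymbol{x}^\top\boldsymbol{D}\boldsymbol{x}$, whose derivative splits as $-\boldsymbol{x}^\top\boldsymbol{L}\boldsymbol{s}(\boldsymbol{x})+\boldsymbol{x}^\top\boldsymbol{D}(\boldsymbol{s}(\boldsymbol{x})-\boldsymbol{x})$ --- the first term non-positive by monotonicity of $s$ and symmetry of $\boldsymbol{A}$, the second by global underestimation --- followed by LaSalle. That single $C^1$ function absorbs both the graph-induced contraction and the drift toward $\mathrm{Fix}(s)$, and so sidesteps all the nonsmooth machinery you are worried about. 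Unfortunately, your version as written has gaps that go beyond bookkeeping.

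First, the claim that $M(t)=\max_i x_i(t)$ is non-increasing for all $t$ is false. Take $s\equiv 0$ (globally underestimating, since $x(0-x)=-x^2\le 0$) and $\boldsymbol{x}(0)=-\tfrac{1}{2}\one$: then $\dot x_i=\tfrac{1}{2}>0$ for every $i$, so $M$ strictly increases. Your inequality $D^+M\le s(M)-M$ yields $D^+M\le 0$ only when $M\ge 0$; when $M<0$ the underestimation hypothesis points the wrong way. The monotone quantities are $\max(M,0)$ and $\min(m,0)$, and the boxes $[m(t),M(t)]^N$ are not forward invariant in general. Second --- and this is the real hole --- the collapse $M-m\to 0$ is never established. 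The differential inequalities for $M$ and $m$ are compatible with no contraction at all (for $s(x)=x$ they only give that $M-m$ is non-increasing, yet consensus must still be proved); the contraction can only come from the graph coupling, which in your sketch enters solely through the invariance argument, and that argument is incomplete: from $\dot x_i=0$ at a maximizing agent you get $s(x_j)=s(M_\infty)$ for its neighbours, but to propagate along paths you need $x_j=M_\infty$ (so that $j$ is again a maximizer), and $s(x_j)=s(M_\infty)$ does not imply $x_j=M_\infty$ when $s$ is not injective --- precisely the situation for the piecewise-affine quantization approximations the paper targets. Finally, the ``cleaner'' scalar comparison ODE $\dot y=s(y)-y$ cannot replace this step: for $s=\mathrm{id}$ it reads $\dot y=0$ and gives no information, while the proposition still asserts synchronization. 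The skeleton (classical max--min consensus) is salvageable, but the decisive step --- why the spread shrinks to zero --- is exactly the part you have deferred, whereas the paper's quadratic Lyapunov function delivers it directly.
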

\begin{proof}
	Consider the Lyapunov candidate function $\boldsymbol{V}: \mathcal{X} \to \R_{\geq 0} $ defined by
	\begin{equation*}
		\boldsymbol{V}(\boldsymbol{x}) = \frac{1}{2} \boldsymbol{x}^\top \boldsymbol{D} \boldsymbol{x},
	\end{equation*}
	where the degree matrix $\boldsymbol{D}$ is symmetric positive definite ($\boldsymbol{D} = \boldsymbol{D}^\top \succ 0$) since the graph $\Gcal$ is connected.
	
	Calculating the time derivative of $\boldsymbol{V}$ along the trajectories of system \eqref{eq:dynamic}, yields
	\begin{align*}
		\dot{\boldsymbol{V}}(\boldsymbol{x}) &= \boldsymbol{x}^\top \boldsymbol{D} \dot{\boldsymbol{x}} = \boldsymbol{x}^\top \boldsymbol{D}(\boldsymbol{D}^{-1} \!\boldsymbol{A} \boldsymbol{s}(\boldsymbol{x}) - \boldsymbol{x}) \\
		&= \boldsymbol{x}^\top \boldsymbol{A} \boldsymbol{s}(\boldsymbol{x}) - \boldsymbol{x}^\top \boldsymbol{D} \boldsymbol{s}(\boldsymbol{x}) + \boldsymbol{x}^\top \boldsymbol{D} (\boldsymbol{s}(\boldsymbol{x}) - \boldsymbol{x}) \\
		&= - \boldsymbol{x}^\top \boldsymbol{L} \boldsymbol{s}(\boldsymbol{x}) + \boldsymbol{x}^\top \boldsymbol{D} (\boldsymbol{s}(\boldsymbol{x}) - \boldsymbol{x}),
	\end{align*}
	where $\boldsymbol{L} = \boldsymbol{D} - \boldsymbol{A}$ is the Laplacian matrix.

Let us demonstrate that both terms in $\dot{\boldsymbol{V}}(\boldsymbol{x})$ are non-positive. For the first term, since $s$ is non-decreasing, we have $(x_i - x_j)(s(x_i) - s(x_j)) \geq 0$ for all $i,j \in \Vcal$. This implies:
\begin{align*}
    \boldsymbol{x}^\top \boldsymbol{L} \boldsymbol{s}(\boldsymbol{x}) &= \sum_{1 \leq i, j \leq N} a_{ij} x_i \left(s(x_i) - s(x_j)\right) \\
    &= \frac{1}{2} \sum_{1 \leq i, j \leq N} a_{ij} \left(x_i - x_j\right)\left(s(x_i) - s(x_j)\right),
\end{align*}
where the second equality follows from the symmetry of the adjacency matrix ($a_{ij} = a_{ji} \geq 0$) due to $\Gcal$ being undirected. Therefore, we have that $-\boldsymbol{x}^\top \boldsymbol{L} \boldsymbol{s}(\boldsymbol{x}) \leq 0$.

For the second term, the global underestimation property of $s$ ensures that $x_i(s(x_i) - x_i) \leq 0$ for all $i \in \Vcal$ and $x_i \in [-1,1]$. Consequently:
\begin{equation*}
    \boldsymbol{x}^\top \boldsymbol{D} \left(\boldsymbol{s}(\boldsymbol{x}) - \boldsymbol{x}\right) = \sum_{1 \leq i \leq N} d_i x_i(s(x_i) - x_i) \leq 0,
\end{equation*}
where the non-negativity of the degrees $d_i$ preserves the inequality.

Let us now characterize the largest invariant set where $\dot{\boldsymbol{V}}(\boldsymbol{x}) = 0$. The condition $\dot{\boldsymbol{V}}(\boldsymbol{x}) = 0$ is satisfied if and only if both terms in the derivative vanish simultaneously. The first term $-\boldsymbol{x}^\top \boldsymbol{L} \boldsymbol{s}(\boldsymbol{x}) = 0$ implies that $x_i = x_j$ for all $i,j \in \Vcal$, while the second term $\boldsymbol{x}^\top \boldsymbol{D} \left(\boldsymbol{s}(\boldsymbol{x}) - \boldsymbol{x} \right) = 0$ implies that $x_i = s(x_i)$ for all $i \in \Vcal$. The first condition establishes that $\boldsymbol{x} \in \mathcal{S}$, which by Lemma~\ref{lemma:forward_invariant} is forward invariant. Therefore, the largest invariant set satisfying $\dot{\boldsymbol{V}}(\boldsymbol{x}) = 0$ can be characterized as $\mathcal{S} \cap \mathrm{Fix}(s)^N$ which is precisely the set $\mathcal{S}_{\mathrm{Fix}(s)}$.

The convergence of solutions follows from LaSalle's invariance principle \cite[Theorem 3.4]{khalil_nonlinear_2002}. Given that $\mathcal{X}$ is compact and forward invariant, $\boldsymbol{V}(\boldsymbol{x})$ is bounded from below on $\mathcal{X}$. Moreover, since $\dot{\boldsymbol{V}}(\boldsymbol{x}) \leq 0$, and $\mathcal{S}_{\mathrm{Fix}(s)}$ is the largest invariant set where $\dot{\boldsymbol{V}}(\boldsymbol{x}) = 0$, we can conclude that every solution with initial condition in $\mathcal{X}$ converges to $\mathcal{S}_{\mathrm{Fix}(s)}$ as $t \to \infty$.
\end{proof}

Then, in case of global underestimation, the convergence to a fixed point of $s$ is guaranteed. This behavior is illustrated in Figure \ref{fig:opinion_underestimation}. 


\begin{figure}
    \centering
    \vspace{0.1cm}
    \includegraphics[width=\linewidth]{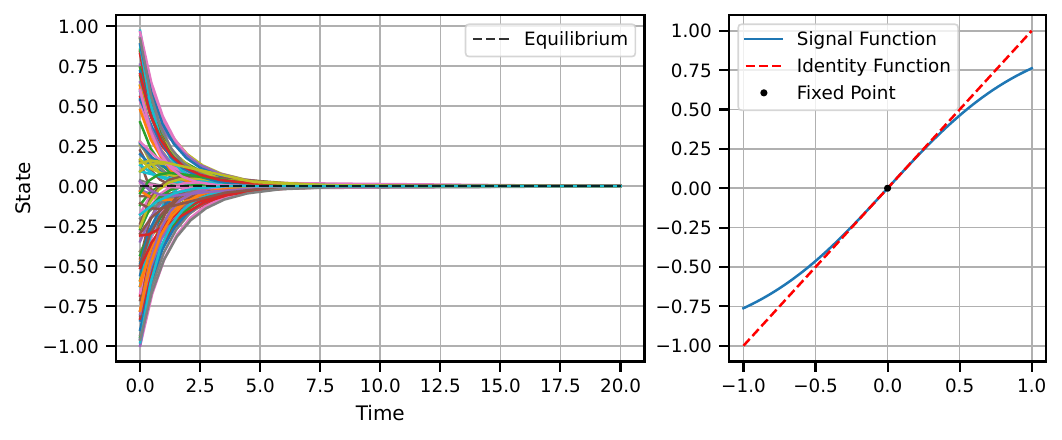}
    \caption{State evolution for agents with global underestimation signal over a connected graph (Erd\H{o}s-Rényi with $N=100$ and $p=0.1$) with random uniform initial conditions and $s(x) = \tanh(x)$.}
    \label{fig:opinion_underestimation}
\end{figure}
Having established the existence and convergence properties of synchronization equilibria, a fundamental question concerns their stability characteristics. The local behavior of the signal function $s$ around these equilibria plays a crucial role in determining their stability properties. The following theorem provides a complete characterization of the stability of synchronization equilibria based on the local estimation properties of $s$.

\begin{theorem}\label{thm:stability_consensus}
	Let $\Gcal$ be a connected graph and $\mathcal{S}_c=c \one$ be a synchronization equilibrium. Then the following holds true:
	\begin{enumerate}
		\item $\mathcal{S}_c$ is locally stable if and only if $s$ is a consistent estimation around $c$. \label{itm:1}
		\item $\mathcal{S}_c$ is locally asymptotically stable if and only if $\mathcal{S}_c$ is locally stable and $c$ is an isolated fixed point of $s$. \label{itm:2}
	\end{enumerate}
\end{theorem}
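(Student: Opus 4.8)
The plan is to prove the two equivalences with a shifted copy of the Lyapunov function used in Proposition \ref{prop:stability_synchronization_for_underestimation}, together with the observation that \eqref{eq:dynamic} restricted to the forward-invariant manifold $\mathcal{S}$ (Lemma \ref{lemma:forward_invariant}) collapses to the scalar equation $\dot p = s(p)-p$. For the ``if'' part of item \ref{itm:1}, take $V(\boldsymbol{x}) = \tfrac{1}{2}(\boldsymbol{x}-c\one)^\top\boldsymbol{D}(\boldsymbol{x}-c\one)$, which is positive definite about $\mathcal{S}_c$ since $\boldsymbol{D}\succ 0$. Using $\boldsymbol{s}(c\one)=c\one$ and $\boldsymbol{L}\one=\zero$, the same manipulations as in the proof of Proposition \ref{prop:stability_synchronization_for_underestimation} give
\begin{equation*}
\dot V(\boldsymbol{x}) = -\,\boldsymbol{x}^\top\boldsymbol{L}\boldsymbol{s}(\boldsymbol{x}) + \sum_{i=1}^{N} d_i\,(x_i-c)\bigl(s(x_i)-x_i\bigr).
\end{equation*}
The first term is $\le 0$ on all of $\mathcal{X}$ by monotonicity of $s$ and symmetry of $\boldsymbol{A}$, exactly as before; and if $s$ is a consistent estimation on a neighborhood $I$ of $c$, every summand of the second term is $\le 0$ as soon as all $x_i\in I$. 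Choosing $\rho>0$ small enough that the sublevel set $\{V\le\rho\}$ lies in $I^{N}$ makes it forward invariant with $\dot V\le 0$, and Lyapunov's stability theorem gives local stability of $\mathcal{S}_c$.

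For the ``only if'' part of item \ref{itm:1} I would argue by contraposition. If $s$ is not a consistent estimation around $c$, then every neighborhood of $c$ contains a point $x$ with $(x-c)(s(x)-x)>0$; viewed inside $\mathcal{S}$, such an $x\one$ is a state at which the scalar flow $\dot p = s(p)-p$ points strictly away from $c$. It then remains to show that the corresponding trajectory of \eqref{eq:dynamic}, which stays in $\mathcal{S}$ by Lemma \ref{lemma:forward_invariant}, actually leaves a fixed neighborhood of $\mathcal{S}_c$, contradicting stability. I expect this to be the main obstacle: one must turn a pointwise violation of consistency into a genuine escape. This is immediate when $c$ is an isolated zero of the map $x\mapsto s(x)-x$ — then $s(x)-x$ keeps a constant, destabilizing sign on a one-sided interval at $c$ and the scalar flow is monotone and repelling there — and in the general case it requires a closer look at the sign pattern of $s(x)-x$ near $c$.

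For item \ref{itm:2}, the ``only if'' implication is immediate: asymptotic stability implies stability, and if $c$ were not isolated in $\mathrm{Fix}(s)$ there would be $c_n\to c$ with $c_n\in\mathrm{Fix}(s)\setminus\{c\}$, so by Proposition \ref{prop:synchronization_equilibria_are_fixed_points} the equilibria $c_n\one$ would accumulate at $\mathcal{S}_c$, precluding local attractivity. Conversely, assume $\mathcal{S}_c$ is locally stable — hence, by item \ref{itm:1}, $s$ is a consistent estimation on some neighborhood $I$ of $c$ — and that $c$ is isolated, so we may shrink $I$ to contain no fixed point of $s$ other than $c$. On a sublevel set of the same $V$ contained in $I^{N}$ we still have $\dot V\le 0$, so by LaSalle's invariance principle every such trajectory converges to the largest invariant subset of $\{\dot V=0\}$; on that set the nonpositive terms $d_i(x_i-c)(s(x_i)-x_i)$ all vanish, which forces $x_i\in\{c\}\cup(\mathrm{Fix}(s)\cap I)=\{c\}$ for each $i$. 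Hence the set reduces to $\{c\one\}$, every trajectory starting near $\mathcal{S}_c$ converges to it, and $\mathcal{S}_c$ is locally asymptotically stable.
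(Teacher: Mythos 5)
Your sufficiency argument for item \ref{itm:1} and your treatment of item \ref{itm:2} coincide with the paper's proof: it uses exactly the shifted Lyapunov function $V(\boldsymbol{x})=\tfrac12(\boldsymbol{x}-c\one)^\top\boldsymbol{D}(\boldsymbol{x}-c\one)$ and the same splitting of $\dot V$ into a Laplacian term and a degree-weighted estimation term; for item \ref{itm:2} it likewise gets attractivity from $\dot V=0$ only at $c\one$ when $c$ is isolated, and for the converse it reads isolation off the scalar flow $\dot p=s(p)-p$ on $\mathcal{S}$ rather than from your (equally valid, arguably cleaner) observation that non-isolated fixed points produce equilibria accumulating at $\mathcal{S}_c$. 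Your explicit choice of a sublevel set of $V$ inside $I^N$, and your extraction of $x_i\in\mathrm{Fix}(s)\cap I=\{c\}$ from the vanishing of the second term alone in the LaSalle step, are details the paper leaves implicit; both are correct.

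The step you flagged in the necessity direction of item \ref{itm:1} is the one genuine gap, and it is precisely the step the paper itself treats informally: there, local stability is said to give a forward-invariant neighborhood $\boldsymbol{\Omega}$, one restricts to $I=\boldsymbol{\Omega}\cap\mathcal{S}$, and the pointwise sign condition on $s(x)-x$ for $x\in I$ is asserted directly --- which is your missing implication stated in direct rather than contrapositive form. To close it along your lines, observe that if $(x_0-c)(s(x_0)-x_0)>0$ with, say, $x_0>c$, then the scalar solution of $\dot p=s(p)-p$ through $x_0$ is nondecreasing and converges to $c'=\inf\{y\ge x_0 : s(y)=y\}$ (or reaches $1$), so the excursion from $c$ is essentially $c'-c$; stability is contradicted exactly when such $c'$ remains bounded away from $c$ as $x_0\downarrow c$, which holds whenever the zeros of $x\mapsto s(x)-x$ do not accumulate at $c$ on the offending side. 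In the remaining pathological case (zeros accumulating at $c$ with $s(x)>x$ in between), the excursions shrink with $x_0$ and a $\max_i x_i$ comparison shows the full flow stays Lyapunov stable even though consistency fails, so the contrapositive cannot be completed by the manifold restriction alone; some non-accumulation hypothesis on $\mathrm{Fix}(s)$ near $c$ is needed to make the escape argument airtight. You correctly isolated the only non-routine point of the proof; everything else matches the paper.
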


\begin{proof}
\ref{itm:1}) Let us start by showing that if $\mathcal{S}_c$ is locally stable, then $s$ must be a consistent estimation around $c$. Indeed, if $\mathcal{S}_c$ is locally stable, there exists a forward invariant neighborhood $\boldsymbol{\Omega}$ of $\mathcal{S}_c$ for the dynamics \eqref{eq:dynamic}. Let us note $I = \boldsymbol{\Omega} \cap \mathcal{S} \subset \left[-1,1\right]$. By Proposition~\ref{lemma:forward_invariant}, $\mathcal{S}_{I}$ is also forward invariant for \eqref{eq:dynamic}. Then, by considering the dynamics along the synchronization manifold, one has for all $x \in I$, 
	\begin{equation*}
		\sign{\dot{x}} = - \sign{s(x) - x} \text{ or } \dot{x} = s(x) - x = 0.
	\end{equation*}
    where $\sign{x}$ is the sign of $x$. This yields that for all $x \in I$, one has $x < c$ implies $s(x) \leq x$ and $x > c$ implies $s(x) \geq x$. Combining the two gives $(x-c)(s(x)-x) \leq 0$, which proves that $s$ must be a consistent estimation around $c$.

	Conversely, let us show that if $s$ is a consistent estimation around $c$, then $\mathcal{S}_c$ is locally stable. Let $s$ be a consistent estimation around $c$. Then, by definition, there exists a neighborhood $I$ of $c$ such that $s$ is an overestimation for $x < c$ and an underestimation for $x > c$. Using a similar Lyapunov function as in the proof of Proposition~\ref{prop:stability_synchronization_for_underestimation}, we can show that $\mathcal{S}_c$ is locally stable. Indeed, the Lyapunov function is now given by $\boldsymbol{V}(\boldsymbol{x}) = \left(\boldsymbol{x} - c \one\right)^\top \! \boldsymbol{D} \left(\boldsymbol{x} - c \one\right)/2$. The time derivative of $\boldsymbol{V}$ along the trajectories of system \eqref{eq:dynamic} is given by
	\begin{equation*}
		\dot{\boldsymbol{V}}(\boldsymbol{x}) = - \left(\boldsymbol{x} - c \one\right)^\top \! \boldsymbol{L} \boldsymbol{s}(\boldsymbol{x}) + \left(\boldsymbol{x} - c \one\right)^\top \! \boldsymbol{D} (\boldsymbol{s}(\boldsymbol{x}) - \boldsymbol{x}),
	\end{equation*}
	which is non-positive since $s$ is a consistent estimation around $c$. Then, $\mathcal{S}_c$ is locally stable by \cite[Theorem 3.1]{khalil_nonlinear_2002}.

	\ref{itm:2}) If in addition, $c$ is an isolated fixed point, then $c\one$ is locally attractive, since in a neighborhood $\boldsymbol{\Omega}$ of $\mathcal{S}_c$  one has for all $\boldsymbol{x} \in \boldsymbol{\Omega}$, $\dot{\boldsymbol{V}}(\boldsymbol{x})\leq 0$ and $\dot{\boldsymbol{V}}(\boldsymbol{x}) = 0$ if and only if $\boldsymbol{x} = c \one$. Thus, $\mathcal{S}$ is locally asymptotically stable.
    
    Conversely, if $\mathcal{S}_c$ is locally attractive, it must be attractive also along the synchronization manifold. Then, for the one dimensional dynamics on $\mathcal{S}$, there exists a neighborhood $I \subset \left[ -1,1\right]$ of $c$ such that for all $x\in I$, $\sign{\dot{x}} = - \sign{s(x) - x}$ and $\dot{x} = 0$ if and only if $x = c$. This yields that $c$ is the only fixed point of $s$ contained in $I$, meaning it is isolated.
\end{proof}

The theorem characterizes the stability conditions for synchronization equilibria in terms of the behavior of the function $s$ near fixed points. Specifically, a synchronization equilibrium $\mathcal{S}_c$ is locally asymptotically stable (AS) if and only if the function satisfies specific local estimation properties around $c$: for points $x < c$ in a neighborhood of $c$, the function $s$ must overestimate, while for points $x > c$, the function $s$ must underestimate.  

\begin{remark}\label{remark:quantization_function_stability}
    While Lipschitz continuity of $s$ was not explicitly used in the proof, it is assumed to ensure unique solutions and avoid technical complications. Nevertheless, the same result can be shown for approximations of non-Lipschitz-continuous functions as long as they are non-decreasing. For instance, consider the quantization function $q(x) = \mathrm{sign}(x)$ from \cite{chowdhuryContinuousOpinionsDiscrete2016} and \cite{prisantDisagreementLimitCycles2024}. For any $\varepsilon >0$, one may approximate $q(x)$ by $s_{\varepsilon}(x) = q(x)$ for  $x \in [-1,1]  \setminus \{[-\varepsilon, \varepsilon]\}$ and $s_{\varepsilon}(x) = x/\varepsilon$ for $x \in [-\varepsilon, \varepsilon]$. Then, by Theorem~\ref{thm:stability_consensus}, the same result holds for any $\varepsilon > 0$. i.e., the only asymptotically stable synchronization equilibria are $\left\{-\one, \one\right\}$.
\end{remark} 

This result provides a rigorous framework for analyzing the local stability of synchronization equilibria at fixed points of $s$ for connected graphs for a general signal function $s$. Figure \ref{fig:consensus_stability} illustrates this phenomenon, contrasting a stable consensus equilibrium (left panel) with an unstable consensus equilibrium (right panel).


\begin{figure}
	\centering
    \vspace{0.3cm}
	\begin{subfigure}[t]{0.45\linewidth}
		\centering
		\tikzset{every picture/.style={line width=0.75pt}} 

		\begin{tikzpicture}[x=0.75pt,y=0.75pt,yscale=-1,xscale=1]

			\draw [color={rgb, 255:red, 208; green, 2; blue, 27 }  ,draw opacity=1 ] [dash pattern={on 4.5pt off 4.5pt}]  (249.83,80.17) -- (181.25,148.75) -- (170.33,159.67) ;
			\draw [line width=1.5]    (169.57,140.5) .. controls (184.24,125.2) and (192,122.21) .. (210.08,119.92) .. controls (228.17,117.62) and (247.14,99.64) .. (250.29,90.36) ;
			\draw  [color={rgb, 255:red, 1; green, 113; blue, 241 }  ,draw opacity=1 ][fill={rgb, 255:red, 0; green, 124; blue, 255 }  ,fill opacity=1 ] (209.7,121.72) .. controls (208.7,121.51) and (208.07,120.53) .. (208.28,119.53) .. controls (208.49,118.54) and (209.47,117.9) .. (210.47,118.11) .. controls (211.46,118.32) and (212.1,119.3) .. (211.89,120.3) .. controls (211.68,121.3) and (210.7,121.93) .. (209.7,121.72) -- cycle ;
			\draw    (170.15,100.08) -- (198,100.08) ;
			\draw [shift={(200,100.08)}, rotate = 180] [color={rgb, 255:red, 0; green, 0; blue, 0 }  ][line width=0.75]    (10.93,-3.29) .. controls (6.95,-1.4) and (3.31,-0.3) .. (0,0) .. controls (3.31,0.3) and (6.95,1.4) .. (10.93,3.29)   ;
			\draw    (250.14,140.08) -- (222.45,140.08) ;
			\draw [shift={(220.45,140.08)}, rotate = 360] [color={rgb, 255:red, 0; green, 0; blue, 0 }  ][line width=0.75]    (10.93,-3.29) .. controls (6.95,-1.4) and (3.31,-0.3) .. (0,0) .. controls (3.31,0.3) and (6.95,1.4) .. (10.93,3.29)   ;

			\draw (221.07,115.89) node [anchor=north west][inner sep=0.75pt]    {$s( x)$};
			\draw (216.9,82.71) node [anchor=north west][inner sep=0.75pt]  [color={rgb, 255:red, 208; green, 2; blue, 27 }  ,opacity=1 ]  {$x$};
			\draw (198.32,102.36) node [anchor=north west][inner sep=0.75pt]    {$\textcolor[rgb]{0,0.46,1}{c}$};

		\end{tikzpicture}

			\subcaption{AS: consistent estimation around $c$.}
	\end{subfigure}
	\hfil
	\begin{subfigure}[t]{0.45\linewidth}
		
		\centering

\tikzset{every picture/.style={line width=0.75pt}} 

\tikzset{every picture/.style={line width=0.75pt}} 

\begin{tikzpicture}[x=0.75pt,y=0.75pt,yscale=-1,xscale=1]

\draw [color={rgb, 255:red, 208; green, 2; blue, 27 }  ,draw opacity=1 ] [dash pattern={on 4.5pt off 4.5pt}]  (229.83,60.17) -- (150.33,139.67) ;
\draw [line width=1.5]    (170.62,140.23) .. controls (185.48,123.18) and (188.14,121.18) .. (190.08,99.92) .. controls (190.58,94.42) and (191.59,89.78) .. (192.96,85.76) .. controls (196.89,74.22) and (203.81,67.76) .. (210.31,60.69) ;
\draw  [color={rgb, 255:red, 0; green, 117; blue, 255 }  ,draw opacity=1 ][fill={rgb, 255:red, 0; green, 118; blue, 255 }  ,fill opacity=1 ] (189.7,101.72) .. controls (188.7,101.51) and (188.07,100.53) .. (188.28,99.53) .. controls (188.49,98.54) and (189.47,97.9) .. (190.47,98.11) .. controls (191.46,98.32) and (192.1,99.3) .. (191.89,100.3) .. controls (191.68,101.3) and (190.7,101.93) .. (189.7,101.72) -- cycle ;
\draw    (200.15,110.08) -- (228,110.08) ;
\draw [shift={(230,110.08)}, rotate = 180] [color={rgb, 255:red, 0; green, 0; blue, 0 }  ][line width=0.75]    (10.93,-3.29) .. controls (6.95,-1.4) and (3.31,-0.3) .. (0,0) .. controls (3.31,0.3) and (6.95,1.4) .. (10.93,3.29)   ;
\draw    (180.31,89.92) -- (152.62,89.92) ;
\draw [shift={(150.62,89.92)}, rotate = 360] [color={rgb, 255:red, 0; green, 0; blue, 0 }  ][line width=0.75]    (10.93,-3.29) .. controls (6.95,-1.4) and (3.31,-0.3) .. (0,0) .. controls (3.31,0.3) and (6.95,1.4) .. (10.93,3.29)   ;

\draw (165.02,61.99) node [anchor=north west][inner sep=0.75pt]    {$s( x)$};
\draw (215.18,77.85) node [anchor=north west][inner sep=0.75pt]  [color={rgb, 255:red, 208; green, 2; blue, 27 }  ,opacity=1 ]  {$x$};
\draw (176.35,97.05) node [anchor=north west][inner sep=0.75pt]    {$\textcolor[rgb]{0,0.46,1}{c}$};

\end{tikzpicture}

\subcaption{Unstable: not consistent estimation around $c$.}
	\end{subfigure}
	\caption{Illustration of the local stability conditions for synchronization equilibria as described in Theorem \ref{thm:stability_consensus}.}
	\label{fig:consensus_stability}
\end{figure}
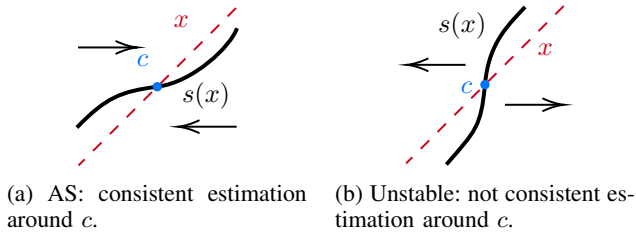

One may ask what the domain of attraction for a synchronization equilibrium is. With this aim, let us consider the following proposition.
\begin{proposition}\label{prop:attraction_domain_synchronization_equilibrium}
	Let $k_1 < k_2 < \dots < k_m$ be the ordered fixed points around which $s$ is inconsistent. 
    Then, for all $l \in \{1, \dots, m-1\}$, the sets
	\begin{equation*}
		\mathcal{K}_l = \left[k_l,k_{l+1}\right]^N \setminus \{k_l \one, k_{l+1}\one\}
	\end{equation*}
	are forward invariant for the dynamics \eqref{eq:dynamic}. 
	
	Additionally, let $I = \left(k_l,k_{l+1}\right)$ and denote $C = \mathrm{Fix}(s) \cap I$. Then, the attraction domain of $\mathcal{S}_C$ contains at least $\mathcal{K}_l$.
\end{proposition}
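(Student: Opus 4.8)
The plan is to exploit the order-preserving and coordinate-wise structure of the vector field $\boldsymbol{f}$ together with the already-established boundary analysis (Nagumo-type argument) used in the invariance of $\mathcal{X}$. First I would prove forward invariance of the box $[k_l,k_{l+1}]^N$. Fix $l$ and write $a := k_l$, $b := k_{l+1}$. Since $a,b \in \mathrm{Fix}(s)$ and $s$ is non-decreasing, for every $x \in [a,b]$ we have $s(x) \in [s(a),s(b)] = [a,b]$; hence $\boldsymbol{s}$ maps $[a,b]^N$ into itself, and since $\boldsymbol{D}^{-1}\boldsymbol{A}$ is row-stochastic, each component $\boldsymbol{e}_i^\top \boldsymbol{D}^{-1}\boldsymbol{A}\boldsymbol{s}(\boldsymbol{x})$ lies in $[a,b]$. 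Then on the face $\{x_i = b\}$ one gets $\dot{x}_i = \boldsymbol{e}_i^\top \boldsymbol{D}^{-1}\boldsymbol{A}\boldsymbol{s}(\boldsymbol{x}) - b \leq 0$, and on $\{x_i = a\}$ one gets $\dot{x}_i \geq 0$. By Nagumo's theorem (invoked exactly as in the proof of the first proposition of Section~\ref{subsec:invariance_of_the_state_space}), $[a,b]^N$ is forward invariant.

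Next I would remove the two excluded points. The equilibria $a\one$ and $b\one$ are, by Proposition~\ref{prop:synchronization_equilibria_are_fixed_points}, genuine equilibria of \eqref{eq:dynamic}; by uniqueness of solutions (guaranteed by Lipschitz continuity of $\boldsymbol{f}$), no trajectory starting in $[a,b]^N \setminus \{a\one, b\one\}$ can reach either of them in finite time. Hence $\mathcal{K}_l = [a,b]^N \setminus \{a\one,b\one\}$ is forward invariant.

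For the statement on the attraction domain, I would reuse the Lyapunov construction from Proposition~\ref{prop:stability_synchronization_for_underestimation}, but only ``between the inconsistent fixed points.'' The key observation is that on the open interval $I = (a,b)$, since $a$ and $b$ are the \emph{consecutive} fixed points of $s$ around which $s$ is inconsistent, every fixed point of $s$ in $I$ is one around which $s$ is consistent, and on $I \setminus C$ the sign of $s(x) - x$ is controlled: near $a^+$ one has $s(x) \geq x$ (otherwise $a$ would be consistent from the right in a way making it the relevant boundary) and near $b^-$ one has $s(x) \leq x$. More precisely I would argue that on all of $I$ one has $(x - \pi(x))(s(x)-x) \le 0$ where $\pi(x)$ denotes the nearest point of $C \cup \{a,b\}$; what I actually need is just the integral sign condition making the candidate
\[
  \boldsymbol{V}(\boldsymbol{x}) = \tfrac{1}{2}\,\boldsymbol{x}^\top \boldsymbol{D}\,\boldsymbol{x}
  - \sum_{i=1}^N d_i \int_{k_l}^{x_i}\!\bigl(s(u) - u\bigr)\,du
\]
a valid Lyapunov function on $\mathcal{K}_l$: differentiating along \eqref{eq:dynamic} gives $\dot{\boldsymbol{V}}(\boldsymbol{x}) = -\boldsymbol{x}^\top \boldsymbol{L}\,\boldsymbol{s}(\boldsymbol{x}) - \sum_i d_i (s(x_i)-x_i)(\boldsymbol{e}_i^\top\boldsymbol{D}^{-1}\boldsymbol{A}\boldsymbol{s}(\boldsymbol{x}) - x_i)$, and the first term is $\leq 0$ by monotonicity of $s$ and symmetry of $\boldsymbol{A}$ exactly as before, while the second can be bounded using $s(x_i)\in[a,b]$. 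Then LaSalle's invariance principle on the compact forward-invariant set $\overline{\mathcal{K}_l} = [a,b]^N$ forces convergence to the largest invariant subset of $\{\dot{\boldsymbol{V}} = 0\}$, which is $\mathcal{S} \cap \mathrm{Fix}(s)^N \cap [a,b]^N = \mathcal{S}_{\{a,b\}} \cup \mathcal{S}_C$; trajectories starting in $\mathcal{K}_l$ avoid $a\one$ and $b\one$, but to conclude that they converge to $\mathcal{S}_C$ rather than merely to the closure, I would note that $a\one$ and $b\one$ are unstable (being inconsistent fixed points, by Theorem~\ref{thm:stability_consensus}\ref{itm:1}), so their regions of attraction within $\mathcal{K}_l$ are empty.

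The main obstacle I anticipate is pinning down the sign of $s(x)-x$ on $I$ well enough to make a single Lyapunov function work simultaneously on the whole box $\mathcal{K}_l$, rather than only in a small neighborhood of $\mathcal{S}_C$: inconsistency of $s$ at $a$ and $b$ only constrains $s$ immediately to one side of each, so some care (possibly a monotonicity/comparison argument coordinate-by-coordinate, using that the scalar flow $\dot{y} = s(y)-y$ on $[a,b]$ cannot cross $a$ or $b$) is needed to rule out trajectories drifting toward the unstable boundary equilibria. I would handle this by combining the LaSalle argument above with the instability of $a\one$ and $b\one$ from Theorem~\ref{thm:stability_consensus}, which together yield that $\mathcal{K}_l$ lies in the attraction domain of $\mathcal{S}_C$.
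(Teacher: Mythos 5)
Your treatment of the forward invariance of $\mathcal{K}_l$ is correct and in fact more complete than the paper's: you verify the inward-pointing condition on every boundary face of $[k_l,k_{l+1}]^N$ by observing that $s$ maps $[k_l,k_{l+1}]$ into itself and that $\boldsymbol{D}^{-1}\boldsymbol{A}$ is row stochastic (the paper only checks the $2^N-2$ vertices of the hypercube), and you correctly exclude the two corner equilibria by uniqueness of solutions, since a Lipschitz flow cannot reach an equilibrium in finite time.

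The attraction-domain half, however, has two genuine gaps. First, your Lyapunov candidate comes with no sign certificate: the derivative you write is miscomputed --- for $\boldsymbol{V}(\boldsymbol{x})=\tfrac12\boldsymbol{x}^\top\boldsymbol{D}\boldsymbol{x}-\sum_i d_i\int_{k_l}^{x_i}\left(s(u)-u\right)du$ one gets $\dot{\boldsymbol{V}}=-\boldsymbol{x}^\top\boldsymbol{L}\boldsymbol{s}(\boldsymbol{x})+\boldsymbol{x}^\top\boldsymbol{D}\left(\boldsymbol{s}(\boldsymbol{x})-\boldsymbol{x}\right)-\left(\boldsymbol{s}(\boldsymbol{x})-\boldsymbol{x}\right)^\top\boldsymbol{D}\dot{\boldsymbol{x}}$, so you dropped the middle cross term --- and even in corrected form the statement that the remaining terms ``can be bounded using $s(x_i)\in[a,b]$'' is not an argument; those terms have no definite sign on the whole box. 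You flag this yourself as the main obstacle but never resolve it. Second, and more seriously, your closing step --- ``$k_l\one$ and $k_{l+1}\one$ are unstable, so their regions of attraction within $\mathcal{K}_l$ are empty'' --- invokes a false principle: an unstable equilibrium of a Lipschitz vector field can attract a nontrivial set of trajectories (a saddle attracts its stable manifold), so instability in the sense of Theorem~\ref{thm:stability_consensus} does not by itself exclude trajectories of $\mathcal{K}_l$ converging asymptotically to a corner. Your LaSalle step therefore leaves $\mathcal{S}_{\{k_l,k_{l+1}\}}$ inside the candidate limit set and nothing in the plan removes it.

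For comparison, the paper argues differently: it shows $C\neq\emptyset$ by the intermediate value theorem applied to $g(x)=s(x)-x$ on $(k_l,k_{l+1})$, notes that every $c\in C$ is a fixed point around which $s$ is consistent (the inconsistent ones being exhausted by $k_1,\dots,k_m$), hence stable by Theorem~\ref{thm:stability_consensus}, and then applies LaSalle with the shifted quadratic $\boldsymbol{V}(\boldsymbol{x})=\left(\boldsymbol{x}-c\one\right)^\top\boldsymbol{D}\left(\boldsymbol{x}-c\one\right)/2$, whose Laplacian cross term is controlled via $\boldsymbol{L}\one=\zero$. If you wish to keep your route, replace the instability appeal by an explicit repulsion mechanism at the corners: for instance $W(\boldsymbol{x})=\sum_i d_i\left(x_i-k_l\right)$ satisfies $\dot{W}=\sum_i d_i\left(s(x_i)-x_i\right)$, which is strictly positive near $k_l\one$ inside $\mathcal{K}_l$ once you establish that $s(x)>x$ just to the right of $k_l$ (the same sign information the paper uses in its intermediate value step); this genuinely rules out convergence to $k_l\one$, whereas ``unstable hence non-attracting'' does not.
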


\begin{proof}
	To show the forward invariance of $\mathcal{K}_l$,
    as before, it is sufficient to prove that the dynamics \eqref{eq:dynamic} is pointing inward on the boundary of $\mathcal{K}_l$.

	Let $x \in \left\{k_l,k_{l+1}\right\}^N \setminus \{k_l \one, k_{l+1}\one\}$, and denote, for an node $i$, $n_i^-$ and $n_i^+$ the number of neighbors with state $k_{l}$ and $k_{l+1}$, respectively. Then, for all $i \in \{1, \dots, N\}$ one has,
	\begin{align*}
		\dot{x}_i &= \frac{1}{d_i} \sum_{j=1}^N a_{ij} s(x_j) - x_i = \frac{1}{d_i} \left( n_i^+ k_{l+1} - n_i^- k_l \right) - x_i,
	\end{align*}
	where $k_l < (n_i^+ k_{l+1} + n_i^- k_{l})/d_i < k_{l+1}$ since $n_i^+ + n_i^- = d_i$ and $n_i^+, n_i^- > 0$. Thus, if $x_i = k_l$, then $\dot{x}_i > 0$ and if $x_i = k_{l+1}$, then $\dot{x}_i < 0$. Then, $\mathcal{K}_l$ is forward invariant, proving the first part of the proposition.

	Now, let us show the existence of a fixed point of $s$ in $I$ such that $\mathcal{S}_C$ is forward invariant. The function $g(x) = s(x) - x$ is continuous on $I$, for small $\varepsilon >0$ one has $g(k_l + \varepsilon) > 0$ and $g(k_{l+1} - \varepsilon) < 0$. Then, by the intermediate value theorem, there exists $c \in \left(k_l,k_{l+1}\right)$ such that $g(c) = 0$. Then, $C$ is nonempty. Moreover, $s$ must be a consistent estimation around any $c \in C$ by definition of $k_1, k_2,\dots, k_m$. Then, by Theorem~\ref{thm:stability_consensus}, for all $c \in C$, $c\one$ is a stable equilibrium of \eqref{eq:dynamic} and $\mathcal{S}_C$ is forward invariant for \eqref{eq:dynamic}. Finally, by using $\boldsymbol{V}(\boldsymbol{x}) = \left(\boldsymbol{x} - c \one\right)^\top \! \boldsymbol{D}\left(\boldsymbol{x} - c\one\right)/2$, and applying the LaSalle's invariance principle \cite[Theorem 3.4]{khalil_nonlinear_2002}, one has that the attraction domain of $\mathcal{S}_C$ contains at least $\mathcal{K}_l$.
\end{proof}

When $s$ is a globally overestimating function, the following corollary holds true.

\begin{figure}
    \centering
    \vspace{0.1cm}
    \includegraphics[width=\linewidth]{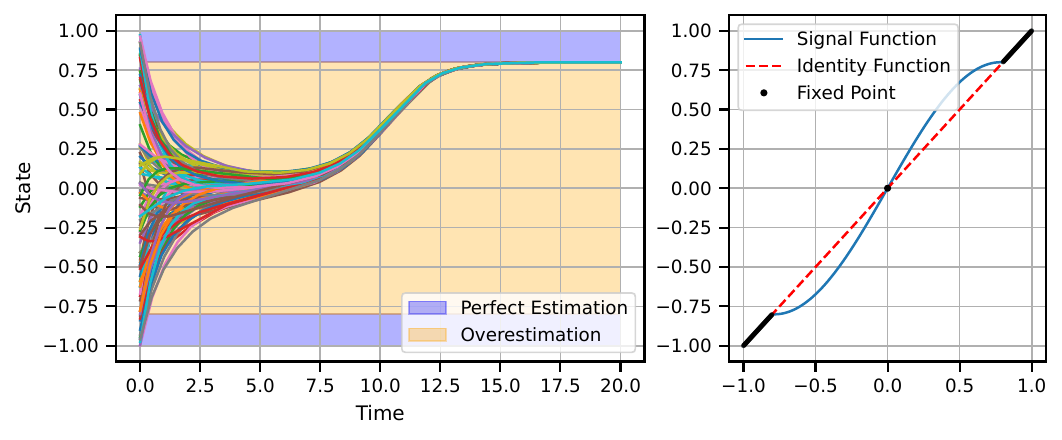}
    \caption{State evolution for agents with a globally overestimating signal under the same conditions as in Figure \ref{fig:opinion_underestimation}, $s(x) = 0.8 \sin(2x)/\sin(1.6)$ for $x\in [-0.8,0.8]$ and $s(x) = x$ otherwise.}
    \label{fig:opinion_overestimation}
\end{figure}

\begin{corollary}
    Let $s$ be a globally overestimating function and let $x(0) \in \left[0,1\right]^N$ or $x(0) \in \left[-1,0\right]^N$. Then, the dynamics \eqref{eq:dynamic} converges to the synchronization equilibrium $\mathcal{S}_{\mathrm{Fix}(s)}$.
\end{corollary}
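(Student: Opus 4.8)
The plan is to reduce the problem to the positive orthant and then combine a LaSalle argument with an analysis of the dynamics restricted to $\mathrm{Fix}(s)^N$. By the symmetry $\boldsymbol{x}\mapsto-\boldsymbol{x}$, which replaces $s$ by the signal function $\tilde{s}(x)=-s(-x)$ (again non-decreasing, Lipschitz and globally overestimating, with $\mathrm{Fix}(\tilde{s})=-\mathrm{Fix}(s)$), it suffices to treat the case $\boldsymbol{x}(0)\in[0,1]^N$. I would first check that $\mathcal{X}_+:=[0,1]^N$ is forward invariant: since $s$ is continuous and overestimating, $s(x)\geq x\geq 0$ for $x\in(0,1]$ and $s(0)=\lim_{x\to 0^+}s(x)\geq 0$, so $s$ maps $[0,1]$ into $[0,1]$; then the same boundary argument (Nagumo's theorem) as in the invariance proof for $\mathcal{X}$ gives $\dot{x}_i\geq 0$ whenever $x_i=0$ and $\dot{x}_i\leq 0$ whenever $x_i=1$.

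On the compact forward-invariant set $\mathcal{X}_+$ I would then use the LaSalle candidate $W(\boldsymbol{x})=-\one^\top\boldsymbol{D}\boldsymbol{x}=-\sum_i d_i x_i$. Using the symmetry of $\boldsymbol{A}$ (so that $\one^\top\boldsymbol{A}=\one^\top\boldsymbol{D}$), a direct computation along \eqref{eq:dynamic} yields $\dot{W}(\boldsymbol{x})=-\sum_i d_i\big(s(x_i)-x_i\big)$, which is non-positive on $\mathcal{X}_+$ by global overestimation and vanishes if and only if $s(x_i)=x_i$ for all $i$. LaSalle's invariance principle \cite[Theorem 3.4]{khalil_nonlinear_2002} then shows that every trajectory starting in $\mathcal{X}_+$ converges to the largest invariant set $M$ contained in $E:=\{\boldsymbol{x}\in[0,1]^N:\,s(x_i)=x_i\ \forall i\}=\mathrm{Fix}(s)^N\cap[0,1]^N$.

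The delicate step — and the one I expect to be the main obstacle — is that $E$ is in general strictly larger than $\mathcal{S}_{\mathrm{Fix}(s)}$ (for instance when $s$ has a whole interval of fixed points, or simply more than one of them), so one must still identify $M$. On $E$ one has $\boldsymbol{s}(\boldsymbol{x})=\boldsymbol{x}$, hence the vector field \eqref{eq:dynamic} restricted to $E$ coincides with the linear consensus $\dot{\boldsymbol{x}}=-\boldsymbol{D}^{-1}\boldsymbol{L}\boldsymbol{x}$; consequently $M$, being invariant and contained in $E$, is an invariant set for the linear consensus dynamics. Since $M$ is bounded and, on a connected graph, the disagreement under linear consensus blows up in backward time unless it is identically zero (equivalently, $\boldsymbol{D}^{-1}\boldsymbol{L}$ has a simple zero eigenvalue with eigenvector $\one$ and all other eigenvalues positive, cf.\ Lemma~\ref{lemma:perron_frobenius}), every point of $M$ must lie in $\mathcal{S}$; therefore $M\subseteq\mathcal{S}\cap\mathrm{Fix}(s)^N=\mathcal{S}_{\mathrm{Fix}(s)}$ and the trajectory approaches $\mathcal{S}_{\mathrm{Fix}(s)}$.

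Finally, to obtain convergence to a single equilibrium rather than merely to the set, I would use that $-W(\boldsymbol{x}(t))=\sum_i d_i x_i(t)$ is non-decreasing and bounded on $\mathcal{X}_+$, hence converges to some $\sigma$; since $c\one\mapsto -W(c\one)=c\,\one^\top\boldsymbol{D}\one$ is injective, the $\omega$-limit set — which is nonempty, contained in $\mathcal{S}_{\mathrm{Fix}(s)}$, and on which $-W$ equals $\sigma$ — reduces to the single point $c^\ast\one$ with $c^\ast=\sigma/(\one^\top\boldsymbol{D}\one)\in\mathrm{Fix}(s)$, giving $\boldsymbol{x}(t)\to c^\ast\one\in\mathcal{S}_{\mathrm{Fix}(s)}$. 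The case $\boldsymbol{x}(0)\in[-1,0]^N$ follows from the symmetry invoked at the outset.
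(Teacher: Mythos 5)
Your proof is correct, but it takes a genuinely different route from the paper's. The paper disposes of this corollary in two lines by invoking Proposition~\ref{prop:attraction_domain_synchronization_equilibrium}: for a globally overestimating $s$, the points $0$ and $\pm 1$ are fixed points, and the hypercubes $[0,1]^N$ and $[-1,0]^N$ play the role of the invariant sets $\mathcal{K}_l$ whose interiors are attracted to the synchronized fixed points they contain. (Strictly speaking, $\pm 1$ are \emph{consistent} fixed points, so the intervals $[0,1]$ and $[-1,0]$ are not literally of the form $[k_l,k_{l+1}]$ from that proposition; the paper is really reusing its proof technique rather than its statement.) You instead give a self-contained argument: forward invariance of the orthant, the linear monotone functional $W(\boldsymbol{x})=-\sum_i d_i x_i$ whose derivative $-\sum_i d_i\left(s(x_i)-x_i\right)$ is non-positive precisely because of overestimation on $[0,1]$, LaSalle, and then the key extra step of identifying the largest invariant set inside $\mathrm{Fix}(s)^N\cap[0,1]^N$ by observing that there the flow coincides with linear consensus, whose bounded complete orbits on a connected graph must lie in $\mathcal{S}$ (backward-time blow-up of the disagreement). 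This correctly addresses the genuine gap you flag — that $\{\dot W=0\}$ is strictly larger than $\mathcal{S}_{\mathrm{Fix}(s)}$ — which the paper's route sidesteps by working interval by interval. Your approach buys two things the paper's does not state: it does not rely on the (slightly misapplied) hypotheses of Proposition~\ref{prop:attraction_domain_synchronization_equilibrium}, and the injectivity of $c\mapsto -W(c\one)$ upgrades set convergence to convergence to a single point $c^\ast\one$, even when $\mathrm{Fix}(s)\cap(0,1)$ is not a singleton. The only implicit assumption you share with the paper is connectedness of $\Gcal$, which is needed both for Lemma~\ref{lemma:perron_frobenius} and for your backward-time argument, and should be stated.
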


The proof of this corollary is straightforward due to the Proposition~\ref{prop:attraction_domain_synchronization_equilibrium}. Indeed, if $s$ is globally overestimating, then the function $s$ is overestimating for all $x \in \left[0,1\right]$ or $x \in \left[-1,0\right]$ with $-1$ and $1$ as fixed points. Then, by Proposition~\ref{prop:attraction_domain_synchronization_equilibrium}, the dynamics \eqref{eq:dynamic} converges to one of the synchronization equilibrium in $\mathcal{S}_{\mathrm{Fix}(s)}$. However, the attraction basin of $\mathcal{S}_{\mathrm{Fix}(s)}$ is not restricted to $\left[0,1\right]^N$ or $\left[-1,0\right]^N$ as illustrated in Figure \ref{fig:opinion_overestimation}. In this figure, we observe a contraction of the states of the agents around an unstable equilibrium due to the topological properties of the graph, and then states enter in 
$\left[0,1\right]^N$ and converge to a synchronization equilibria in $\mathcal{S}_{\mathrm{Fix}(s)}$. This motivates the analysis of graph structure on the convergence properties of the dynamics \eqref{eq:dynamic}.

\begin{remark}
	In the same spirit as in Remark~\ref{remark:quantization_function_stability}, analogous attraction domain results can be derived for the discontinuous quantization function $q(x) = \mathrm{sign}(x)$ from \cite{chowdhuryContinuousOpinionsDiscrete2016} and \cite{prisantDisagreementLimitCycles2024}. In this case, the attraction domain of the synchronization manifold contains the same hypercubes as a globally overestimating function. i.e., $\left[0,1\right]^N$ and $\left[-1,0\right]^N$.
\end{remark}

\subsection{Synchronization induced by the graph structure} \label{subsec:synchronization_properties_induced_by_the_graph_structure}

In this section, we study the synchronization equilibria of the dynamics \eqref{eq:dynamic} for the case of symmetric neighbors. We will show how the symmetry of the neighbors can influence the synchronization equilibria. 


\begin{proposition}\label{prop:symmetry_synchronization}
	Let $i$, $j\in \Vcal$ such that $i\neq j$ and either: ``$\mathcal{N}_i = \mathcal{N}_j$" or ``$i\in \mathcal{N}_j$, $j\in \mathcal{N}_i$ and $\mathcal{N}_j \setminus\left\{i\right\} = \mathcal{N}_i \setminus \left\{j\right\} $". Then, agents $i$ and $j$ asymptotically synchronize. i.e., $\lim_{t \to \infty} x_i(t) - x_j(t)=0$.
\end{proposition}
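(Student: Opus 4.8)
The plan is to follow the scalar difference $z(t) := x_i(t) - x_j(t)$ and show it decays to zero exponentially in both cases. A preliminary remark is that under either hypothesis the two agents have the same degree $d_i = d_j =: d$: in the first case $\mathcal{N}_i = \mathcal{N}_j$ gives this at once, and in the second case $\mathcal{N}_i = (\mathcal{N}_i \setminus \{j\}) \cup \{j\}$ and $\mathcal{N}_j = (\mathcal{N}_j \setminus \{i\}) \cup \{i\}$ are disjoint unions (no self-loops) with $\mathcal{N}_i \setminus \{j\} = \mathcal{N}_j \setminus \{i\}$, so $d_i = |\mathcal{N}_i \setminus \{j\}| + 1 = |\mathcal{N}_j \setminus \{i\}| + 1 = d_j$.

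In the first case, $a_{ik} = a_{jk}$ for every $k \in \Vcal$, so subtracting the two instances of \eqref{eq:dynamic_single_agent} the whole neighbor-averaging term cancels and one is left with $\dot{z} = -(x_i - x_j) = -z$. Hence $z(t) = z(0)e^{-t} \to 0$, and in fact the two agents synchronize at an exponential rate independent of $s$.

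In the second case, set $S := \mathcal{N}_i \setminus \{j\} = \mathcal{N}_j \setminus \{i\}$. Then \eqref{eq:dynamic_single_agent} gives $\dot{x}_i = \frac{1}{d}\big(\sum_{k \in S} s(x_k) + s(x_j)\big) - x_i$ and the symmetric expression for $\dot{x}_j$; subtracting, the common sum over $S$ cancels and $\dot{z} = \frac{1}{d}\big(s(x_j) - s(x_i)\big) - z$. I would then take $W(t) = \tfrac{1}{2} z(t)^2$, so that along trajectories $\dot{W} = z\dot{z} = -\tfrac{1}{d}\, z\big(s(x_i) - s(x_j)\big) - z^2$. Since $s$ is non-decreasing and, by the state-space invariance proposition, $x_i(t), x_j(t) \in [-1,1]$, the term $z\big(s(x_i) - s(x_j)\big) = (x_i - x_j)(s(x_i) - s(x_j))$ is nonnegative; dropping it yields $\dot{W} \le -z^2 = -2W$, whence $W(t) \le W(0)e^{-2t}$ and therefore $z(t) \to 0$.

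The only step that needs a moment of care is signing the cross term $\tfrac{1}{d}\, z\big(s(x_i) - s(x_j)\big)$ in the second case, and monotonicity of $s$ is exactly what delivers it — no Lipschitz continuity or further structure of $s$ is used, only well-posedness and forward invariance of $[-1,1]^N$. This is the point of the statement: symmetric (or ``twin'') neighborhood structure forces pairwise synchronization of $i$ and $j$ regardless of the signal function and of which global equilibrium configuration the rest of the network settles into.
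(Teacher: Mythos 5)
Your proof is correct and follows essentially the same route as the paper: subtract the two equations so the common-neighbor terms cancel, then use monotonicity of $s$ to sign the remaining cross term in the dynamics of $x_i - x_j$. If anything, your version is slightly tighter, since the explicit bound $\dot{W} \le -2W$ gives exponential decay directly, whereas the paper argues via $\sign{\dot{\delta}_{ij}} = -\sign{\delta_{ij}}$ and leaves the quantitative step to the reader; your preliminary observation that $d_i = d_j$ is also used implicitly (and without comment) in the paper's computation.
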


\begin{proof}
	Let $i$, $j\in \Vcal$ such that $i\neq j$ and either: ``$\mathcal{N}_i = \mathcal{N}_j$" or ``$i\in \mathcal{N}_j$, $j\in \mathcal{N}_i$ and $\mathcal{N}_j \setminus\left\{i\right\} = \mathcal{N}_i \setminus \left\{j\right\} $". We consider the dynamics of the difference $\delta_{ij} = x_i - x_j$. Then, we have
	\begin{align}
		\dot{\delta}_{ij} &= \dot{x}_i - \dot{x}_j = \left( \boldsymbol{e}_i - \boldsymbol{e}_j \right)^\top \! \left( \boldsymbol{D}^{-1} \! \boldsymbol{A} \boldsymbol{s}(\boldsymbol{x}) - \boldsymbol{x} \right) \nonumber\\
		&= \left( \boldsymbol{e}_i - \boldsymbol{e}_j \right)^\top \! \boldsymbol{D}^{-1}\! \boldsymbol{A} \boldsymbol{s}(\boldsymbol{x}) - \delta_{ij}. \label{eq:dynamics_delta}
	\end{align}
	Now, due to the symmetry of the neighborhood, we have that for all $\boldsymbol{y} \in \R^N$, $\left( \boldsymbol{e}_i - \boldsymbol{e}_j \right)^\top \boldsymbol{D}^{-1} \! \boldsymbol{A} \boldsymbol{y}  = d_{i}^{-1} a_{ij} y_j - d_{j}^{-1} a_{ji} y_i = d_{i}^{-1} a_{ij} \left(y_j - y_i\right)$. Then, we have
	\begin{align*}
		&\sign{\dot{\delta}_{ij}} = \sign{d_{i}^{-1} a_{ij} \left(  s(x_j) - s(x_i) \right) - \delta_{ij}} \\
		& = -\sign{ d_{i}^{-1}a_{ij} \left(s(x_i) - s(x_j)\right) + \delta_{ij}}  = -\sign{ \delta_{ij}},
	\end{align*}
	since the function $s$ is non-decreasing and $d_{i}^{-1} a_{ij} \geq 0$. Then, $\lim_{t \to \infty} \delta_{ij}(t) = 0$ and agents $i$ and $j$ asymptotically synchronize. 
\end{proof}

This result only relies on the neighborhoods of agents $i$ and $j$ and does not depend on global properties of the graph $\Gcal$. The following corollary is then immediate.
\begin{corollary}\label{coro:all_to_all}
	Let $\Gcal$ be the all-to-all graph or a complete bipartite graph. Then, all agents asymptotically synchronize.
\end{corollary}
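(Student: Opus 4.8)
The plan is to reduce everything to Proposition~\ref{prop:symmetry_synchronization}, and then patch the one configuration it does not resolve directly.

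\textbf{All-to-all graph.} Here every pair $i\neq j$ satisfies $i\in\mathcal{N}_j$, $j\in\mathcal{N}_i$ and $\mathcal{N}_i\setminus\{j\}=\mathcal{V}\setminus\{i,j\}=\mathcal{N}_j\setminus\{i\}$, so Proposition~\ref{prop:symmetry_synchronization} applies to each pair and gives $\lim_{t\to\infty}\bigl(x_i(t)-x_j(t)\bigr)=0$ for all $i,j\in\mathcal{V}$. Nothing further is required in this case.

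\textbf{Complete bipartite graph.} Let $\mathcal{V}=V_1\cup V_2$ be the bipartition, $|V_\ell|=n_\ell$. Two agents $i,i'$ in the same part have $\mathcal{N}_i=\mathcal{N}_{i'}$ (both equal the opposite part), so Proposition~\ref{prop:symmetry_synchronization} already gives asymptotic synchronization \emph{within} each part. The genuine work is to show the two parts synchronize with one another: for $i\in V_1$, $j\in V_2$ the sets $\mathcal{N}_i\setminus\{j\}=V_2\setminus\{j\}$ and $\mathcal{N}_j\setminus\{i\}=V_1\setminus\{i\}$ are different, so Proposition~\ref{prop:symmetry_synchronization} does not cover cross-part pairs, and this is the main obstacle.

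To overcome it I would pass to the per-part averages $\bar x_\ell=\tfrac1{n_\ell}\sum_{i\in V_\ell}x_i$ and deviations $e_i=x_i-\bar x_1$ $(i\in V_1)$, $f_j=x_j-\bar x_2$ $(j\in V_2)$. Since each $i\in V_1$ obeys $\dot x_i=F_2(t)-x_i$ with the same forcing $F_2(t):=\tfrac1{n_2}\sum_{j\in V_2}s(x_j)$, the differences $x_i-x_{i'}$ decay exponentially for $i,i'\in V_1$ (a special case of Proposition~\ref{prop:symmetry_synchronization}), hence $e_i(t)\to0$; likewise $f_j(t)\to0$. Averaging \eqref{eq:dynamic_single_agent} over each part yields
\begin{align*}
\dot{\bar x}_1 &= \tfrac1{n_2}\sum_{j\in V_2}s(\bar x_2+f_j)-\bar x_1 = s(\bar x_2)-\bar x_1+\eta_2(t),\\
\dot{\bar x}_2 &= \tfrac1{n_1}\sum_{i\in V_1}s(\bar x_1+e_i)-\bar x_2 = s(\bar x_1)-\bar x_2+\eta_1(t),
\end{align*}
where Lipschitz continuity of $s$ (constant $L_s$) gives $|\eta_2(t)|\le\tfrac{L_s}{n_2}\sum_{j\in V_2}|f_j(t)|\to0$ and similarly $\eta_1(t)\to0$. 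Writing $\delta=\bar x_1-\bar x_2$ and using monotonicity of $s$ (so $\delta\bigl(s(\bar x_1)-s(\bar x_2)\bigr)\ge0$),
\begin{equation*}
\frac{d}{dt}\frac{\delta^2}{2}=-\delta^2-\delta\bigl(s(\bar x_1)-s(\bar x_2)\bigr)+\delta\bigl(\eta_2-\eta_1\bigr)\le-\delta^2+|\delta|\bigl(|\eta_1|+|\eta_2|\bigr),
\end{equation*}
so the upper right Dini derivative of $|\delta|$ satisfies $D^+|\delta(t)|\le-|\delta(t)|+|\eta_1(t)|+|\eta_2(t)|$. A standard comparison (Grönwall) argument with this vanishing input forces $\limsup_{t\to\infty}|\delta(t)|=0$, i.e. $\bar x_1-\bar x_2\to0$. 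Finally, for $i\in V_1$, $j\in V_2$,
\begin{equation*}
x_i-x_j=e_i+(\bar x_1-\bar x_2)-f_j\longrightarrow 0,
\end{equation*}
which, together with the within-part convergence, shows that all agents asymptotically synchronize and completes the proof. The only nontrivial ingredient is the cross-part step: recognizing that $(\bar x_1,\bar x_2)$ evolves as a two-agent system up to a vanishing perturbation, and then propagating that perturbation through the scalar estimate for $|\delta|$; all remaining parts are immediate from Proposition~\ref{prop:symmetry_synchronization}.
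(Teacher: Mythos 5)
Your proof is correct, and for the complete bipartite case it does genuinely more than the paper, which simply declares the corollary ``immediate'' from Proposition~\ref{prop:symmetry_synchronization} and offers no further argument. Your treatment of the all-to-all graph coincides with the paper's intent: every pair satisfies the second hypothesis of Proposition~\ref{prop:symmetry_synchronization}, and nothing else is needed. For the complete bipartite graph you correctly observe that cross-part pairs $i\in V_1$, $j\in V_2$ do \emph{not} satisfy either hypothesis of Proposition~\ref{prop:symmetry_synchronization} (the sets $\mathcal{N}_i\setminus\{j\}\subseteq V_2$ and $\mathcal{N}_j\setminus\{i\}\subseteq V_1$ are disjoint, hence equal only for $K_{1,1}$), so the proposition alone yields only within-part synchronization; the paper glosses over this. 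Your patch --- reducing to the two-dimensional averaged system $\dot{\bar x}_1=s(\bar x_2)-\bar x_1+\eta_2$, $\dot{\bar x}_2=s(\bar x_1)-\bar x_2+\eta_1$ with exponentially vanishing perturbations (the within-part deviations decay exactly like $e^{-t}$ since agents in the same part share identical forcing), then running the scalar estimate $D^+|\delta|\le-|\delta|+|\eta_1|+|\eta_2|$ and a comparison argument --- is sound; the monotonicity of $s$ is exactly what kills the cross term, and it also rules out nontrivial $2$-cycles of $s$, which is why no spurious cross-part equilibria can appear. In short: same route as the paper where the paper actually has one, and a correct, necessary completion where it does not.
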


\begin{remark}
	Results from Proposition~\ref{prop:symmetry_synchronization} and Corollary~\ref{coro:all_to_all} depend only on the monotonicity of the signal function $s$. 
    In case of a discontinuous quantization, the result is still valid as long as the function is non-decreasing on the set of the agents' states \cite{ceragioliConsensusDisagreementRole2018b}.
\end{remark}

It is noteworthy that convergence to a synchronization equilibrium is not universally guaranteed for every choice of signal function. In particular, when the signal function exhibits overestimating behavior, even connected graphs may fail to achieve synchronization if they are sparse. Figure \ref{fig:opinion_asynchronization} illustrates this phenomenon using a line graph with six agents, where each agent has at most two neighbors. Under these conditions, the overestimating nature of the signal may prevent the agents' states from converging to a common state.

\begin{figure}
	\centering
    \vspace{0.1cm}
	\includegraphics[width=\linewidth]{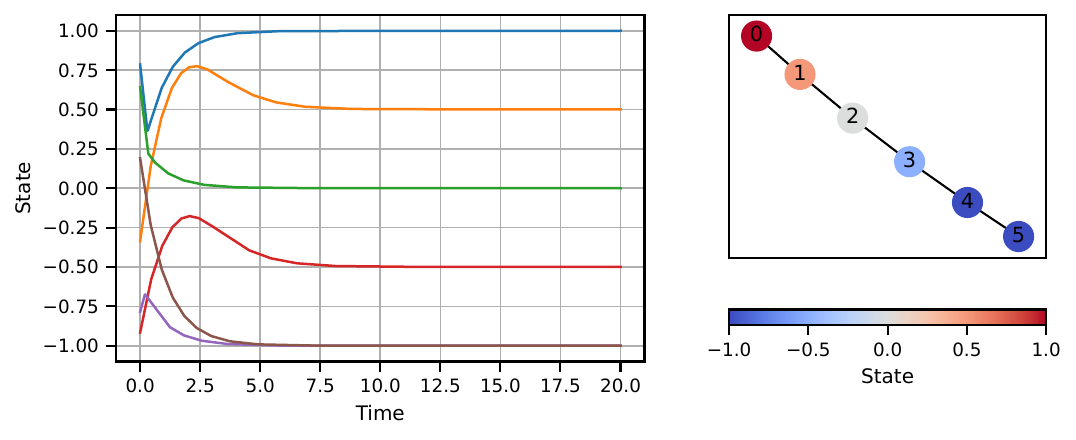}
	\caption{State evolution for agents with a globally overestimating signal ($s(x) = \tanh(20 x)$) over a line graph of 6 agents. The agents do not converge to a synchronization equilibrium for these random initial conditions.}
	\label{fig:opinion_asynchronization}
\end{figure}

\section{CONCLUSIONS AND FUTURE WORK}\label{sec:conclusions}

In this paper, we introduced a framework for continuous-time consensus-like dynamics that generalizes traditional linear consensus and allows a nonlinear Lipschitz-continuous signaling function as the communication medium. Our analysis rigorously characterized the synchronization equilibria as the fixed points of the communication signal and established both local and, in some particular cases, global stability conditions based solely on the properties of the signal function and the underlying network topology. 

These results provide insights on how communication nonlinearity and network connectivity jointly determine multi-agent coordination. They also open several promising research directions, such as: analyzing the asymptotic behavior under non-smooth interaction signals, studying the convergence properties when the signal is not globally underestimating the state, etc. 

\addtolength{\textheight}{-12cm}   








\bibliographystyle{IEEEtran}
\bibliography{COSA}

\end{document}